\documentclass[12pt, letter]{article}
\setlength{\textwidth}{6.5in}
\setlength{\oddsidemargin}{-.125in}
\setlength{\topmargin}{0in}
\setlength{\textheight}{8.25in}
\renewcommand{\baselinestretch}{1.5}
\usepackage{amsmath}
\usepackage{epsfig}
\usepackage{amssymb}
\usepackage{amscd}
\usepackage{palatino}
\usepackage{graphics}
\usepackage{enumerate}
\usepackage{hhline}
\usepackage{natbib}
\usepackage{float}
\usepackage{chngpage}
\usepackage{rotating}
\usepackage{multirow}
\usepackage[table]{xcolor}
\usepackage[separate]{patty}
\def\argmax{\mathop{\rm argmax}\limits}
\def\max{\mathop{\rm max}\limits}
\usepackage{authblk}

\begin{document}
\bibliographystyle{apsr}
\renewcommand{\baselinestretch}{1}

\title{Algorithmic Fairness with Feedback}
\author{John W. Patty\thanks{Professor of Political Science and Quantitative Theory \& Methods, Emory University.  Email: \textit{jwpatty@gmail.com}.} \;\;\;\;\;
Elizabeth Maggie Penn\thanks{Professor of Political Science and Quantitative Theory \& Methods, Emory University.  Email: \textit{elizabeth.m.penn@gmail.com}.}}

\maketitle 

\begin{abstract}
    The field of algorithmic fairness has rapidly emerged over the past 15 years  as algorithms have become ubiquitous in everyday lives.  Algorithmic fairness traditionally considers statistical notions of fairness  algorithms might satisfy in decisions based on noisy data. We first show that these are theoretically disconnected from welfare-based notions of fairness. We then discuss two individual welfare-based notions of fairness, \textit{envy freeness} and \textit{prejudice freeness}, and establish conditions under which they are equivalent to \textit{error rate balance} and \textit{predictive parity}, respectively. We discuss the implications of these findings in light of the recently discovered impossibility theorem in algorithmic fairness (\cite{KleinbergMullainathanRaghavan16}, \cite{Chouldechova17}).\\[3ex]

    \centering \textsc{Working Paper: Comments, Suggestions, and Criticisms Welcome!}
\end{abstract}

\renewcommand{\baselinestretch}{1.5}

\newpage
\section{Introduction}

In 2016, ProPublica published a report on the validity of the COMPAS algorithm, which generates ``recidivism risk scores'' for consideration in important law enforcement decisions,\footnote{COMPAS stands for ''\textbf{C}orrectional \textbf{O}ffender \textbf{M}anagement \textbf{P}rofiling for \textbf{A}lternative \textbf{S}anctions.''} including setting bail and even whether to grant a prisoner parole.\footnote{Jeff Larson, Surya Mattu, Lauren Kirchner, and Julia Angwin. May 23, 2016.  ``How We Analyzed the COMPAS Recidivism Algorithm,'' \textit{ProPublica}: \textit{https://www.propublica.org/article/how-we-analyzed-the-compas-recidivism-algorithm}.}  The analysis concluded that Black defendants were more likely to be judged as ``high risk'' when they were low risk than were white defendants.  The company that developed and sells the algorithm, Northpointe (now Equivant), countered the ProPublica analysis, stating that Black and white defendants were equally likely to truly be high risk, conditional on being assigned a high risk COMPAS score.\footnote{William Dieterich, Christina Mendoza, and Tim Brennan. July 8, 2016. ``COMPAS Risk Scales: Demonstrating Accuracy Equity and Predictive Parity Performance of the COMPAS Risk Scales in Broward County.'' \textit{http://go.volarisgroup.com/rs/430-MBX-989/images/ProPublica\_Commentary\_Final\_070616.pdf}.}

From the outset, this debate clarified that there are multiple reasonable notions of fairness when decisions are ``noisy.'' The resulting discussion continues in both policy and scholarly communities. \cite{KleinbergMullainathanRaghavan16} and \cite{Chouldechova17} quickly and independently showed that, except for very special cases, the competing notions of fairness forwarded in this debate --- \textit{error rate balance} (ProPublica's measure) and \textit{predictive parity} (Northpointe's measure) --- can not be simultaneously satisfied.\footnote{A new direction in this regard is discussed in \cite{ReichVijaykumar21}, who show that if the data is sufficiently high quality for both groups, then pre-processing prior to classification can simultaneously equalize error rates across groups and satisfy calibration within each group.}  This impossibility result has led to many advances in the emerging field of \textit{algorithmic fairness} (or, \textit{algorithmic bias}).  Unsurprisingly, given the increasing ubiquity of algorithms in our everyday lives, many scholars have explored and are exploring ways to best judge whether and how an algorithm (or, indeed, any decision-making procedure) may be biased against (or in favor of) any particular group within a larger population to which the algorithm is applied.

Two aspects of this question that have received relatively little attention to date are (1) what are the connections between ``statistical notions of fairness'' and \textit{individual welfare} and (2) how should one judge fairness in the presence of \textbf{algorithmic endogeneity}: situations in which the data to which the algorithm is applied is itself a function of the algorithm?\footnote{Algorithmic endogeneity is a generalization of the concept of  \textit{performativity} (\textit{e.g.}, \cite{PerdomoEtAl20}).  Algorithmic endogeneity is more focused on the (theoretical) causal structure of ``data drift'' caused by an algorithm, as opposed to how to empirically estimate the structure of this drift.}  Each of these questions is important for extending the study of algorithmic fairness within social science.  We now provide a road map to our argument in this article.

\subsection{Outline of Our Argument}

Our focus in this article is \textit{algorithmic fairness}, a recently developed topic at the intersection of computer science, statistics, health, and social science.  At the heart of algorithmic fairness is the notion of different \textbf{groups} of individuals: algorithmic fairness is centrally focused on how and why any procedure (or ``algorithm'') treats members of the two groups differently.  Scholars have developed a number of fairness properties, the most prominent of which we discuss below prior to moving into our analysis.  A key feature of these notions for our purposes is that they are very generally defined, and essentially context-free.  They tend to be focused on either the \textit{inputs to} or the statistical properties of the \textit{decisions produced by} the algorithm.  Much of the focus in recent work has been on these statistical, decision-based notions of algorithmic fairness.  These notions differ with respect to what the notion of fairness is \textit{conditioned upon}.  In other words, different notions of algorithmic fairness (\textit{e.g.}, \textit{predictive parity} and \textit{error rate balance}, each discussed below) focus on different posterior probabilities, each of which represents a reasonable notion of fairness.  

A fundamental conclusion from the algorithmic fairness literature to date is that these various notions of fairness are in many cases incompatible with each other (\textit{e.g.}, \cite{KleinbergMullainathanRaghavan16}, \cite{Chouldechova17}).  However, the context-free nature of these notions present two separate but related challenges to applying algorithmic fairness in social science:
\begin{itemize}
    \item \textbf{Challenge 1:} \textit{Who benefits from pursuit of algorithmic fairness notion $X$?}
    \item \textbf{Challenge 2:} \textit{How does algorithmic fairness notion $X$ relate to individual incentives?}
\end{itemize}

This article attempts to extend our understanding of algorithmic fairness in each of these directions.  Both challenges require that we include some notion of \textbf{individual welfare} (or \textit{utility}).  We pursue this using the concept of \textbf{envy freeness}, which originated in the study of fair division and cooperative game theory.  In the context of algorithmic fairness, envy freeness is satisfied when no individual who \textit{will be classified by an algorithm} wishes to be in a different \textbf{group} of individuals (Definition \ref{Def:EnvyFree}).  We also introduce a related concept, \textbf{prejudice freeness}, which is satisfied when no individual who \textit{has been classified by an algorithm} wishes to be in a different \textbf{group} of individuals (Definition \ref{Def:PrejudiceFree}).

\paragraph{Our Results.} Tackling the two challenges above requires that we consider the \textbf{choices} available to any given individual.  When the details of an algorithm affect individual incentives, the situation under consideration exhibits \textbf{algorithmic endogeneity}.  We demonstrate that, if the individuals' preferences are independent of their group membership (\textbf{group independence}, Definition \ref{Def:GroupIndependence}), then any envy free algorithm in a binary classification problem must satisfy a choice-based fairness notion, which we refer to as \textbf{equal opportunity}.  We then demonstrate that, when group independence is satisfied in a binary classification problem, satisfaction of envy freeness implies satisfaction of a statistical notion of algorithm fairness known as \textbf{error rate balance} (Theorem \ref{Th:EnvyFreeEqualOpportunityIndependence}) and that the conditions of this equivalence are tight (Proposition \ref{Pr:EnvyFreeTightness}).\footnote{Note that error rate balance is closely related to the notion of equal opportunity defined in \cite{HardtPriceSrebro16}).}  

We then mirror this analysis for prejudice free algorithms. We show that, if individual preferences are group independent, any algorithm satisfying \textbf{predictive parity} must be prejudice free and, 
the two notions are equivalent if and only if the classification problem is binary.  

% We then show that equal opportunity is generally inconsistent with \textbf{anti-classification}.  In words, this implies that an envy free algorithm generally treats people from different groups in different ways.  

\subsection{Related Literature \label{Sec:LiteratureReview}}

The topic of algorithmic endogeneity has emerged over the past decade.  For example, \textit{strategic classification} (\cite{HardtEtAl16}) and \textit{performative prediction} (\cite{PerdomoEtAl20}) each provide guidance into the conditions under which an algorithm can accurately classify data that itself might be responsive to the details of the algorithm itself.  This literature has grown quickly, but our focus is different than many of the scholars who have contributed to it so far.  In particular, much of this literature is more focused on practical and technical questions about how to build (and train) the algorithm in such settings.  

This is important work, to be sure, but it is also largely context-free.  This abstraction is useful for some purposes, but also puts some distance between the analytic primitives of the model and theoretical concepts such as individual, group, and collective \textit{welfare} (however, see the discussion of \cite{ZafarValarRodriguezGummadiWeller17}, below).  Our goal here is to provide more structure on these theoretical questions in social science settings.  

Along these lines, our work is closely related to that of \cite{JungEtAl20}.  Specifically, \cite{JungEtAl20} present a canonical binary classification model of crime prevention with algorithmic endogeneity.  They show that, when designing the algorithm to minimize the probability that any individual will choose to commit a crime, the optimal algorithm will typically satisfy error rate balance.  Jung \textit{et al}.'s analysis focuses only on the \textit{behavioral} impact of the algorithm.  In other words, in the terminology of \cite{PattyPenn23}, authors suppose that the algorithm designer is \textit{compliance motivated}.  Our approach in this article removes this assumption.\footnote{In this regard we are following the approach also taken by \cite{PattyPenn23,PattyPenn23Bureaucracy}.}

In a model of employment with algorithmic endogeneity, \cite{PattyPenn21} consider the impact of information about applicants' group memberships.  They show that removing this information has very ambiguous welfare effects.  Removing this information can make both job applicants \& the employer better off, make both worse off, or benefit one ``side of the market'' while hurting the other.  The range of welfare effects is due to algorithmic endogeneity -- in particular, Patty \& Penn point out that algorithmic endogeneity might lead to there being \textit{multiple equilibria}, and the ``optimal algorithm'' will typically depend on which equilibrium the applicants and employer are engaged in.  Unlike both \cite{JungEtAl20} and our approach in this article, Patty and Penn do not allow the employer to ``pre-commit'' to a hiring algorithm.  The upside of this is that, in equilibrium, the employer's decisions satisfy a version of \textit{calibration}.

\subsection{Welfare-Based Notions of Fairness}

We present several related welfare-based concepts of fairness in this article, with the goal of connecting them with statistical-based notions of fairness, which are at the heart of many ongoing discussions in the field of algorithmic fairness.

\paragraph{Envy free Algorithms.}  We begin with the notion of \textbf{envy freeness}, which captures the notion that no individual regrets his or her own group membership \textit{prior to being classified} by the algorithm in question.  The general concept of envy freeness itself is classic and borrowed from the theory of fair division.  Unsurprisingly, we are not the first to leverage the concept of envy freeness to in algorithmic fairness.
 For example, \cite{ZafarValarRodriguezGummadiWeller17} use the concept to define the notions of \textit{preferred treatment} and \textit{preferred impact}.  Subsequently, \cite{BalcanDickNoothigattuProcaccia19} also proposed this concept in the context of algorithmic fairness.\footnote{\cite{BlandinKash21} also consider welfare-based fairness.}  Balcan \textit{et al}.'s focus is more on the computational and efficiency aspects of the notion in practice.\footnote{Along these lines, see also \cite{DworkImmorlicaKalaiLeiserson18} and \cite{UstunLiuParkes19}, who consider when and how decoupling can mitigate algorithmic unfairness.}  Our approach in this article is more theoretical/conceptual: we are interested in outlining the links between notions of statistical parity and individual welfare.  

\paragraph{Prejudice Freeness.}  In addition to envy freeness, we introduce a related concept, \textbf{prejudice freeness}, which captures the notion that no individual regrets his or her own group membership \textit{after being classified} by the algorithm in question.  Thus, we argue that envy freeness is an \textit{ex ante} fairness,\footnote{As we discuss below, envy freeness is arguably \textit{interim} fairness, because it is conditioned on the individual's characteristics.} while prejudice freeness captures an \textit{ex post} notion of fairness.  We are unaware of any theoretical work to date that mirrors ours in this regard.\footnote{The link between predictive parity and inference has recently attracted some attention (\textit{e.g.}, \cite{ZengDobrianCheng22}).}

\paragraph{Group versus Individual Fairness.} The algorithmic fairness literature can be roughly divided into two families of philosophical approaches to fairness: \textit{group-based} or \textit{individual-based} notions of fairness. In these terms, \cite{ZafarValarRodriguezGummadiWeller17} analyze group-based fairness.  This is one distinction between their approach and our approach.  The basic distinction in this regard is that our notion of fairness is conditioned on each individual's group membership and \textit{individual type}, as opposed to group membership alone: for reasons of space, this distinction is discussed in more detail in Appendix \ref{Sec:GroupLevelEnvyFreeness}.  We now introduce classification algorithms.

\section{Binary Classification Algorithms}

As common in the literature, we consider \textbf{binary classification problems} (or, \textit{BCP}s).  The primitives of such a BCP are as follows:
\begin{enumerate}
\item A binary set of \textbf{behaviors}, $\mathcal{B}=\{0,1\}$, with distribution
\[
\pi \equiv \Pr[\beta_i=1] \in [0,1].
\]
We will refer to the value $\pi$ as the \textbf{prevalence} within the population.
\item A binary set of \textbf{signals}, $\mathcal{S}=\{0,1\}$, with distribution \footnote{Note that Equation \eqref{Eq:SignalingDistribution} implies that the signal, $s$, is equally informative about either choice of behavior, $\beta \in \{0,1\}$.  This could be relaxed, but it simplifies the presentation and discussion of our results.}
\begin{equation}
\label{Eq:SignalingDistribution}
\Pr[s_i = \beta_i \mid \beta_i] \equiv \phi \in [\onehalf, 1].
\end{equation}
We will refer to the value $\phi$ as the \textbf{accuracy} of the signal.
\end{enumerate}

\paragraph{Inferring Behaviors from Signals.} In this article, we consider the problem of inferring each agent $i$'s behavior, $\beta_i$, from the signal that $i$'s behavior generates, $s_i$.  The question, then, is how to assign a \textbf{decision} of either $d_i=0$ or $d_i=1$ to each individual $i$, based on $s_i$.  Accordingly, an \textbf{algorithm} in this setting is a function $\delta: \mathcal{S} \to [0,1]$, where for any individual $j$ with signal $s_j \in \mathcal{S}$,
\[
\delta_{s_j} \equiv \Pr\left[d_j=s_j\mid s_j\right], 
\]
represents the conditional probability that individual $j$ is awarded a decision equal to the signal $s_j$ generated by $j$'s behavior, $\beta_j$.  We denote the set of algorithms by $\mathcal{A} \equiv [0,1]^2$.

\paragraph{Algorithmic Accuracy.}  The accuracy of an algorithm $\delta$'s classification decisions is a common benchmark in the algorithmic fairness literature.  In a BCP with prevalence $\pi \in [0,1]$ and measurement noise $\phi \in [0,1]$, the accuracy of any given algorithm, $\delta \in \mathcal{A}$, is defined by the confusion matrix displayed in Table \ref{Tab:ConfusionMatrixBaseline}.
\begin{table}[hbtp]
    \centering
    \begin{tabular}{|c|c|c|}
    \cline{2-3}
    \multicolumn{1}{c|}{}   &  $d_j=1$  & $d_j=0$   \\
    \hline
                            &                                               & \cellcolor{gray!25}       \\
         $\beta_j=1$        & 
         \begin{tabular}{c}
         \textbf{True Positive}\\
         \hline
         $W_1(\delta\mid \pi,\phi) \equiv$ \\
         $\pi\cdot (\phi \cdot \delta_1 + (1-\phi) \cdot (1-\delta_0))$
         \end{tabular}
         & \cellcolor{gray!25}\begin{tabular}{c}
         \textit{False Negative}   \\
         \hline 
         $L_1(\delta\mid \pi,\phi) \equiv$ \\
         $\pi\cdot (\phi \cdot (1-\delta_1) + (1-\phi) \cdot \delta_0)$
         \end{tabular}\\
                            % & $\pi\cdot (\phi \cdot \delta_1 + (1-\phi) \cdot (1-\delta_0))$ 
                            % & $\pi\cdot (\phi \cdot (1-\delta_1) + (1-\phi) \cdot \delta_0)$            \\  
                            &                                               & \cellcolor{gray!25}       \\
    \hline
                            & \cellcolor{gray!25}                           &                           \\
         $\beta_j=0$        & \cellcolor{gray!25} \begin{tabular}{c}
         \textit{False Positive}\\
         \hline
         $L_0(\delta\mid \pi,\phi) \equiv$ \\
         $(1-\pi)\cdot (\phi \cdot (1-\delta_0) + (1-\phi) \cdot \delta_1)$
         \end{tabular} 
         & \begin{tabular}{c}\textbf{True Negative}    \\
         \hline 
         $W_0(\delta\mid \pi,\phi) \equiv$ \\
         $(1-\pi)\cdot (\phi \cdot \delta_0 + (1-\phi) \cdot (1-\delta_1))$       
         \end{tabular}\\
                            % & $(1-\pi)\cdot (\phi \cdot (1-\delta_0) + (1-\phi) \cdot \delta_1)$
                            % & $(1-\pi)\cdot (\phi \cdot \delta_0 + (1-\phi) \cdot (1-\delta_1))$        \\
                            & \cellcolor{gray!25}                           &                           \\
    \hline
    \end{tabular}
    \caption{Algorithmic Accuracy \& Probabilities of Each Outcome}
    \label{Tab:ConfusionMatrixBaseline}
\end{table}\\
The first of the two quantities, defined in the top left cell of Table \ref{Tab:ConfusionMatrixBaseline}, 
\[
W_1(\delta\mid \pi,\phi)\equiv \pi\cdot (\phi \cdot \delta_1 + (1-\phi) \cdot (1-\delta_0)),
\]
is the proportion of individuals who are correctly assigned a decision of $\delta_j=1$.  This is referred to as $\delta$'s \textbf{true positive rate}.  The second quantity, defined in the bottom right cell of Table \ref{Tab:ConfusionMatrixBaseline},
\[
W_0(\delta\mid \pi,\phi)\equiv (1-\pi)\cdot (\phi \cdot \delta_0 + (1-\phi) \cdot (1-\delta_1)),
\]
is the proportion of individuals who are correctly assigned a decision of $\delta_j=0$.  This is referred to as $\delta$'s \textbf{true negative rate}.  Finally, the off-diagonal quantities represent $\delta$'s \textbf{false positive rate} ($L_0$) and \textbf{false negative rate} ($L_1$).

% \paragraph{Optimal Accuracy.}  With the true positive and negative rates defined for any algorithm $\delta$, the set of algorithms that are \textbf{optimally accurate} for any pair of probabilities, $(\pi,\phi)$.
% \begin{definition}[Optimal Accuracy]
% For any $(\pi,\phi)$, an algorithm $\delta^*(\pi,\phi)$ is \textbf{optimally accurate} if it satisfies
% \[
% \delta^*(\pi,\phi) \;\; \in \;\; \argmax_{\delta \in \mathcal{A}} \; \bigg[ W_1(\delta\mid \pi,\phi) + W_0(\delta\mid \pi,\phi) \bigg].
% \]
% \end{definition}

% The following simple proposition (stated without proof) establishes that, if prevalence ($\pi$) is unaffected by the algorithm, $\delta$, there is always a degenerate optimal algorithm.\footnote{Among others, \cite{AustenSmithBanks96} provide a proof \& lucid description of its logic.}
% \begin{proposition}
%     \label{Pr:OptimalAccuracyExogenous}
%     For any $\pi \in [0,1]$ and $\phi \in [\onehalf,1]$, the optimally accurate algorithm, $\delta^*(\pi,\phi)$, is:
%     \[
%     \delta^*(\pi,\phi) = \begin{cases}
%         (1,0) & \text{ if } \;\; \pi < 1-\phi,\\
%         (1,1) & \text{ if } \;\; \pi \in (1-\phi,\phi),\\
%         (0,1) & \text{ if } \;\; \pi > 1-\phi.
%     \end{cases}
%     \]
% \end{proposition}
% The baseline established in Proposition \ref{Pr:OptimalAccuracyExogenous} is important because it is in general \textit{not} true that any optimally accurate algorithm is degenerate in the presence of algorithmic endogeneity.  Prior to considering that, however, we turn to considering algorithmic fairness. 

\section{Algorithmic Fairness} 

To consider the fairness of an algorithm $\delta$, we suppose that that there are two \textbf{groups}, $g \in \mathcal{G} = \{X,Y\}$, each of which is associated with a prevalence, $\pi_g \in [0,1]$ and measurement noise, $\phi_g\in [\onehalf,1]$.  In this two group setting, the notion of an algorithm becomes more general, represented by a function $\delta: \mathcal{S} \times \mathcal{G} \to [0,1]$, where for any individual $j$ with signal $s_j \in \mathcal{S}$ and group membership $g_j \in \mathcal{G}$,
\[
\delta_{s_j}(g_j) \equiv \Pr\left[d_j=s_j\mid s_j,g_j\right], 
\]
represents the conditional probability that individual $j$ is awarded a decision equal to the signal $s_j$ generated by $j$'s behavior, $\beta_j$.\footnote{The algorithm is allowed to condition on each individual's group membership.  This is in line with much of the algorithmic fairness literature, which generally assumes that the group membership of any given individual is either directly observable or is possible to infer through ancillary data (``proxies'').} With the basic framework of algorithmic fairness laid out, we now proceed to discuss one of the more prominent notions of algorithmic fairness, \textit{error rate balance}.

\subsection{Error Rate Balance}

The notion of \textbf{error rate balance} (or ERB) is focused on whether members of one group can more accurately predict the decision he or she will be awarded, conditional on his or her behavior, than can members of another group.  Formally, ERB is defined as follows.

\begin{definition}[Error Rate Balance]
For any BCP, an algorithm $\delta$ satisfies \textbf{error rate balance} (ERB) with respect to the groups $X$ and $Y$ if
    \begin{eqnarray}
        \Pr[d_j=0\mid \beta_j=0, \delta, X] & = & \Pr[d_j=0\mid \beta_j=0, \delta, Y], \text{ and}  \label{Eq:PositiveErrorBalance}\\
        \Pr[d_j=1\mid \beta_j=1, \delta, X] & = & \Pr[d_j=1\mid \beta_j=1, \delta, Y]. \label{Eq:NegativeErrorBalance}
    \end{eqnarray}
\end{definition}
In the case of COMPAS, the scores did not satisfy ERB, because Black individuals who were \textit{low risk} to recidivate were significantly more likely to be assigned a \textit{high} recidivism risk score than were low risk white individuals.  

\paragraph{Error Rate Balance as Fairness to Individuals.}  We think of error rate balance as capturing ``fairness to the individuals,'' especially in the presence of algorithmic endogeneity, because each individual $j$ faces a trade-off when choosing their behavior, $\beta_j$.   Using an algorithm $\delta$ that satisfies ERB with respect to the distributions $(\pi_X,\pi_Y,\phi_X,\phi_Y)$ guarantees \textit{that \textit{any} individual $j$'s individual behavioral incentives induced by $\delta$ are independent of $j$'s group membership.}  This point will emerge again in our analysis below.  

\subsection{Predictive Parity}

The statistical notion of \textbf{predictive parity} (or PP) requires that decisions awarded to one group be equally informative about these individual's behaviors as those awarded to another group.  Formally, let 
\[
h: \mathcal{D} \times \mathcal{G} \times \mathcal{A} \times \left[\onehalf,1\right] \times [0,1]^2 \to [0,1]
\]
denote the posterior probability that $\beta_i=1$, conditional on $d \in \mathcal{D}$ and $g \in \mathcal{G}$, given the algorithm, $\delta \in \mathcal{G}$, the groups' signal accuracies, $(\phi_X,\phi_Y) \in [\onehalf,1]^2$, and the groups' prevalences, $(\pi_X,\pi_Y) \in [0,1]^2$. 
 With this in hand, PP is defined as follows.\footnote{For any BCP, an algorithm $\delta$ satisfies predictive parity if and only if
    \begin{eqnarray*}
        \Pr[\beta_j=0 \mid d_j=0, \delta, g] & = & \Pr[\beta_j=0 \mid d_j=0, \delta, g'], \text{ and}\\
        \Pr[\beta_j=1 \mid d_j=1, \delta, g] & = & \Pr[\beta_j=1 \mid d_j=1, \delta, g'],
    \end{eqnarray*}
    for all groups $g,g' \in \mathcal{G}$. 
} 
\begin{definition}[Predictive Parity]
An algorithm $\delta$ satisfies \textbf{predictive parity} (PP) with respect to groups $X$ and $Y$ if, for all $d \in \mathcal{D}$, the following holds:
    \begin{eqnarray*}
        h(d,X\mid \delta, \phi_X, \pi_X) & = &
        h(d,Y\mid \delta, \phi_Y, \pi_Y) \in \Delta(\mathcal{B}).
    \end{eqnarray*}
\end{definition}
In the case of COMPAS, the scores satisfied PP, implying that any individual who received a low-risk score was equally likely to actually be a low recidivism risk, regardless of his or her race.  

\paragraph{Predictive Parity as Fairness.} Substantively, PP requires that ``similar individuals who are awarded similarly should be similarly deserving.''  In line with this, PP can be thought of as an \textit{ex post} statistical notion of fairness.   In substantive terms, satisfaction of PP implies that an individual's group membership provides no information about his or her behavior that is not conveyed by the decision he or she was assigned by the algorithm.  
With ERB and PP defined, we now turn to considering the links between these notions and individual welfare.

\subsection{Algorithm Fairness \& Welfare}

 We now explicitly account for \textbf{individual welfare} from different behaviors and/or decisions.\footnote{This mirrors points recently raised by \cite{ZafarValeraGomezRodriguezGummadi17} and others.}  In order to compare algorithms on a welfare basis, one needs to make assumptions about individuals' preferences over their choices and/or the decisions they receive.
 
 Along these lines, it is important to note that people sometimes do not want the algorithm to be perfectly accurate.  For example, as demonstrated in the COMPAS example, an individual (\textit{e.g.}, individuals who would be accurately categorized as ``high risk'' for recidivism) might strictly prefer to be inaccurately categorized as low risk: essentially a preference of \textit{false negative outcomes}.  Similarly, a college applicant might strictly prefer to be admitted than otherwise (an example of a strict preference for \textit{false positive outcomes}).  More generally, there are situations, such as moral hazard problems and military conflict, in which an individual might simply have preference that his or her action is simply innacurately inferred.  The following example captures all such situations and demonstrates the tenuous nature of any claim that welfare is increasing in an algorithm's accuracy.
 
\begin{example}[Accuracy Is Not Necessarily Individually Optimal]
\label{Ex:RelativeFairness}

Consider an individual $i$ with preferences over behavior and decisions, $(\beta_i,d_i)$, as defined in Table \ref{Tab:ErrorPreferences}, where $\alpha_i, \gamma_i, \lambda_i,$ and $t_i$ are real numbers.
\begin{table}[hbtp]
\centering
\begin{tabular}{|c|c|c|}
    \cline{2-3}
    \multicolumn{1}{c|}{}   &  $d_i=1$          & $d_i=0$                   \\
    \hline
         $\beta_i=1$        & $\alpha_i$        & $\alpha_i + \lambda_i$    \\
    \hline
         $\beta_i=0$        & $t_i+\gamma_i$    & $t_i$                     \\
    \hline
    \end{tabular}
    \caption{Individual Preferences Over Outcomes \label{Tab:ErrorPreferencesRaw}}
\end{table}
The \textit{ex ante} expected payoff for individual $i$ from an algorithm $\delta$, given prevalence $\pi$ and measurement noise $\phi$, is equal to 
\begin{eqnarray*}
EU(\delta\mid \pi,\phi) 
% & = & \pi \cdot 
% \left(\alpha_i \cdot \bigg(\phi\delta_1 + (1-\phi)(1-\delta_0)  \bigg) 
% + \lambda_i \cdot \bigg(\phi + \delta_0 -\phi \cdot (\delta_0+\delta_1) \bigg) \right)\\
& = & \pi \cdot 
\left(\alpha_i \cdot \bigg(1 + \phi \cdot (\delta_0+\delta_1) - \delta_0 - \phi  \bigg) 
+ (\alpha_i + \lambda_i) \cdot \bigg(\phi + \delta_0 -\phi \cdot (\delta_0+\delta_1) \bigg) \right) + \\
& & 
(1-\pi) \cdot \left(
(t_i+\gamma_i) \cdot \bigg(\phi + \delta_1 -\phi \cdot (\delta_0+\delta_1) \bigg) + t_i \cdot \bigg(1+\phi \cdot (\delta_0+\delta_1) - \phi - \delta_1  \bigg)
\right),
\end{eqnarray*}
By subtracting $\pi \cdot \alpha_i + (1-\pi) \cdot t_i$ from $EU$, we can re-express the preferences in Table \ref{Tab:ErrorPreferencesRaw} as defined in Table \ref{Tab:ErrorPreferences}.
\begin{table}[hbtp]
\centering
\begin{tabular}{|c|c|c|}
    \cline{2-3}
    \multicolumn{1}{c|}{}   &  $d_i=1$                  & $d_i=0$               \\
    \hline
         $\beta_i=1$        & 0                         & $\lambda_i$  \\
    \hline
         $\beta_i=0$        & $\gamma_i$                 & 0         \\
    \hline
    \end{tabular}
    \caption{Normalized Preferences Over Errors \label{Tab:ErrorPreferences}}
\end{table}\\
In words, individual $i$ receives a net payoff of $\gamma_i$ from a \textbf{false positive result} (\textit{i.e.}, $d_i=1$ despite $\beta_i=1$) and $\lambda_i$ from a \textbf{false negative result} (\textit{i.e.}, $d_i=0$ despite $\beta_i=0$).  Thus, the \textbf{normalized} \textit{ex ante} expected payoff for individual $i$ from an algorithm $\delta$, given prevalence $\pi$ and measurement noise $\phi$, is equal to the following:
\begin{eqnarray}
EU(\delta\mid \pi,\phi) 
% & = & \pi \cdot \lambda_i \cdot \bigg(\phi \cdot (1-\delta_1) + (1-\phi) \delta_0\bigg) + (1-\pi) \cdot \gamma_i \cdot \bigg(\phi \cdot (1-\delta_0) + (1-\phi) \delta_1 \bigg),\\
% & = & \pi \cdot \lambda_i \cdot \bigg(\phi -\phi\delta_1 + \delta_0-\phi\delta_0\bigg) + (1-\pi) \cdot \gamma_i \cdot \bigg(\phi - \phi \delta_0 + \delta_1 -\phi \delta_1 \bigg),\\
& = & \pi \cdot \lambda_i \cdot \underbrace{\bigg(\phi + \delta_0 -\phi \cdot (\delta_0+\delta_1) \bigg)}_{FNR_i(\delta,\phi)} + (1-\pi) \cdot \gamma_i \cdot \underbrace{\bigg(\phi + \delta_1 -\phi \cdot (\delta_0+\delta_1) \bigg)}_{FPR_i(\delta,\phi)}, \label{Eq:NormalizedExpectedPayoffs}
\end{eqnarray}
where the labels underneath the terms of \eqref{Eq:NormalizedExpectedPayoffs} indicate the \textbf{false positive rate} (FPR($\delta,\phi$)) and \textbf{false negative rate} (FNR($\delta,\phi$)) faced by individual $i$ under algorithm $\delta$.

By construction, individual $i$'s normalized \textit{ex ante} \textbf{expected payoff} from a perfectly accurate algorithm $\delta$ is equal to 0.  However, without specifying the values of $\lambda_i$ and $\gamma_i$, \textit{it is impossible to judge whether the individual $i$ is better or worse off with an imperfectly accurate algorithm than $i$ would be with a perfectly accurate algorithm}.  

% To see this more concretely, consider two different situations: 
% \begin{itemize}
%     \item \textbf{Preferences for $d_i=1$.}  Suppose that the decision $d_i=1$ represents the awarding to individual $i$ a financial \textit{reward} and $d_i=0$ represents individual $i$ not receiving any financial reward, such that $\gamma_i=1$ and $\lambda_i=0$. In this case, individual $i$'s \textit{ex ante} expected payoff from $\delta$ is equal to 
%     \[
%     EU(\delta\mid \pi,\phi)  = \pi \cdot (\phi + \delta_0 - \phi \cdot (\delta_0 + \delta_1)),
%     \]
%     \textit{which is strictly positive if $\delta = (0,0)$}.
%     \item \textbf{Preferences for $d_i=0$.}  Now suppose that the decision $d_i=1$ represents a financial \textit{penalty} being levied on $i$ and $d_i=0$ represents individual $i$ not receiving any financial reward, such that $\gamma_i=0$ and $\lambda_i=1$. In this case, individual $i$'s \textit{ex ante} expected payoff from $\delta$ is equal to 
%     \[
%     EU(\delta\mid \pi,\phi)  =  (1-\pi) \cdot (\phi + \delta_1 -\phi \cdot (\delta_0+\delta_1))
%     \]
%     \textit{which is strictly positive if $\delta = (1,1)$}.
% \end{itemize}
\end{example}
% Beyond the cases of financial rewards/penalties, and as alluded to the discussion of COMPAS scores in the introduction, there are many other situations in which an individual $i$ might benefit from being assigned inaccurate decisions.  More importantly, there are at least two types of errors, and no reason to suppose without further context that either type of error is more appealing than the other. Accordingly, 
Example \ref{Ex:RelativeFairness} leads to the following simple result, which clarifies how the accuracy of an algorithm is related to its \textit{welfare effects}.\footnote{Note that Proposition \ref{Pr:AccuracyWelfare}'s hypothesis that the individual experiences the same cardinal level of gain and/or loss from errors (\textit{i.e.}, $|\lambda_i|=|\gamma_i|\neq 0$) is merely for expositional simplicity. The proofs of Proposition \ref{Pr:AccuracyWelfare} and all numbered results are contained in Appendix \ref{Sec:Proofs}.}
\begin{proposition}
\label{Pr:AccuracyWelfare}
    Suppose that $|\lambda_i|=|\gamma_i|\neq 0$.  Then $i$'s expected payoff is 
    \begin{itemize}
        \item increasing in the accuracy of $\delta$ if $\gamma_i=\lambda_i<0$,
        \item decreasing in the accuracy of $\delta$ if $\gamma_i=\lambda_i>0$, and
        \item not measurable with respect to accuracy of $\delta$ if $\lambda_i = - \gamma_i \neq 0$.
    \end{itemize}
\end{proposition}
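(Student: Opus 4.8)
The plan is to reduce the whole proposition to the observation that, under the normalization of Example~\ref{Ex:RelativeFairness}, individual $i$'s expected payoff is a fixed linear combination of the algorithm's false-negative and false-positive rates. Concretely, comparing Equation~\eqref{Eq:NormalizedExpectedPayoffs} with the cells of Table~\ref{Tab:ConfusionMatrixBaseline} gives $L_1(\delta\mid\pi,\phi)=\pi\cdot FNR_i(\delta,\phi)$ and $L_0(\delta\mid\pi,\phi)=(1-\pi)\cdot FPR_i(\delta,\phi)$, so that
\[
EU(\delta\mid\pi,\phi)=\lambda_i\,L_1(\delta\mid\pi,\phi)+\gamma_i\,L_0(\delta\mid\pi,\phi).
\]
Write $A(\delta\mid\pi,\phi)\equiv W_1(\delta\mid\pi,\phi)+W_0(\delta\mid\pi,\phi)$ for the accuracy of $\delta$ (the probability of a correct decision). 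Since $W_1+L_1=\pi$ and $W_0+L_0=1-\pi$ (both read off Table~\ref{Tab:ConfusionMatrixBaseline}), we get $A(\delta\mid\pi,\phi)=1-L_1(\delta\mid\pi,\phi)-L_0(\delta\mid\pi,\phi)$. These two identities do essentially all of the work.

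For the first two bullets, set $\gamma_i=\lambda_i=c$ with $|c|\neq0$. Then $EU(\delta\mid\pi,\phi)=c\bigl(L_1+L_0\bigr)=c\bigl(1-A(\delta\mid\pi,\phi)\bigr)$ is an affine, strictly monotone function of accuracy: increasing in $A$ when $c<0$ and decreasing in $A$ when $c>0$. That is exactly the claimed dichotomy.

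For the third bullet, set $\lambda_i=-\gamma_i=c\neq0$, so $EU(\delta\mid\pi,\phi)=c\bigl(L_1-L_0\bigr)$, and the task is to show this is not a function of $A$, i.e.\ not a function of $L_1+L_0$. Because $\delta\mapsto\bigl(L_0(\delta),L_1(\delta)\bigr)$ is affine on $\mathcal{A}=[0,1]^2$, it suffices to check that the gradients of $L_1+L_0$ and of $L_1-L_0$ in $(\delta_0,\delta_1)$ are linearly independent; differentiating the entries of Table~\ref{Tab:ConfusionMatrixBaseline} and evaluating the resulting $2\times2$ determinant yields a nonzero multiple of $\pi(1-\pi)(2\phi-1)$, which is nonzero whenever $0<\pi<1$ and $\phi>\tfrac12$. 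Hence the level sets of accuracy on $[0,1]^2$ are line segments along which $L_1-L_0$, and therefore $EU$, is non-constant, so one can exhibit algorithms $\delta,\delta'$ with $A(\delta\mid\pi,\phi)=A(\delta'\mid\pi,\phi)$ but $EU(\delta\mid\pi,\phi)\neq EU(\delta'\mid\pi,\phi)$ — i.e.\ $EU$ is not measurable with respect to accuracy.

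The main obstacle is this last case. The monotone cases are immediate once the payoff is written as $\lambda_iL_1+\gamma_iL_0$, but for the ``not measurable'' conclusion one must both verify the non-degeneracy determinant and be honest about the parameter regimes where it vanishes: when $\pi\in\{0,1\}$, or when $\phi=\tfrac12$ with $\pi\neq\tfrac12$, the expression $EU=c(L_1-L_0)$ actually does collapse to an affine function of accuracy, while $\pi=\phi=\tfrac12$ forces accuracy to be constant (and $EU$ non-constant). So I would state the third bullet for non-degenerate BCPs — interior prevalence and informative signal, which is surely the intended reading — and dispose of the degenerate cases in a short remark.
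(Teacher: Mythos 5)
Your proof is correct and follows the same route the paper intends: the paper's own proof of Proposition~\ref{Pr:AccuracyWelfare} is a one-line pointer to Example~\ref{Ex:RelativeFairness} and Equation~\eqref{Eq:NormalizedExpectedPayoffs}, and your identification $EU=\lambda_i L_1+\gamma_i L_0$ together with $A=1-L_1-L_0$ is exactly the content of that reference, so the first two bullets match the paper's (implicit) argument verbatim. Where you add something the paper does not supply is the third bullet: the paper offers no argument at all that $c(L_1-L_0)$ fails to be a function of accuracy, whereas you verify it by computing the $2\times 2$ Jacobian determinant of $(L_1+L_0,\,L_1-L_0)$ in $(\delta_0,\delta_1)$, which indeed equals $2\pi(1-\pi)(2\phi-1)$, and you correctly flag that the claim as literally stated fails in the degenerate regimes $\pi\in\{0,1\}$ and $\phi=\tfrac{1}{2}$ (where $L_1-L_0$ collapses to an affine function of $A$, or $A$ becomes constant). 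That caveat is a genuine, and accurate, sharpening of the proposition's scope that the paper's proof glosses over; it would be worth retaining as a remark.
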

  Because many statistical notions of fairness are based on accuracy, Proposition \ref{Pr:AccuracyWelfare} illustrates a general theoretical disconnect between \textit{statistical notions} and \textit{welfare-based notions} of algorithmic fairness.  Accordingly, we now turn to consider envy free algorithms.

\section{Welfare-Based Fairness: Envy Free Algorithms\label{Sec:EnvyFree}}

One welfare-based fairness notion, originating in cooperative game theory and the study of \textit{fair division}, is the concept of an algorithm being \textbf{envy free}.  An envy free algorithm is one in which, after any individual $i$ observes the algorithm, $\delta$, his or her group membership, $g_i$, and his or her type, $t_i$, individual $i$ does not want to change his or her group membership.  We now turn to the primitives of our analysis.

\subsection{The Primitives} 

The set of \textbf{individuals} is denoted by $N=\{1,2,\ldots,n\}$, and each individual $i \in N$ is assigned to one of two (observable) \textit{groups}, $g \in \mathcal{G} = \{X,Y\}$.\footnote{We consider only two groups in $\mathcal{G}$, but the number of groups is irrelevant for our analysis.}  For simplicity, we assume that $\mathcal{B}=\{0,1\}$ unless stated otherwise.\footnote{We present our analysis for cases with an arbitrary number of behaviors in Appendix \ref{Sec:GeneralClassification}.}  Each individual $i$ is endowed with a privately observed \textbf{type}, denoted by $t_i \in \mathcal{T} = \mathbf{R}$.  Each individual $i$'s type, $t_i$, is independently distributed, conditional on $i$'s group, $g_i$, according to a continuously differentiable cumulative distribution function (CDF), denoted by $F_{g_i}$, which we assume possesses full support on $\mathcal{T}$.

% %We make the following technical assumption to rule out ``corner solutions'' and other minor issues.
% \begin{assumption}[Full Support \& Continuously Differentiable Type Distributions]
% \label{As:FullSupportDifferentiable}
%     For each group $g \in \mathcal{G}$: $F_g$ satisfies each of the following:
%     \begin{enumerate}
%         \item (\textit{Full Support on } $\mathcal{T}$.) $F_g(t) \in (0,1)$ for each $t \in \mathcal{T}$, and
%         \item (\textit{Continuously Differentiable on } $\mathcal{T}$.) $F_g(t)$ is continuously differentiable everywhere on $\mathcal{T}$.
%     \end{enumerate}
% \end{assumption}
% Within our framework, the full support aspect of Assumption \ref{As:FullSupportDifferentiable} ensures that the data is always ``noisy'' in the sense that no individual $i$ can be perfectly predicted based on his or her group, $g_i$, alone.  The differentiability assumption ensures that the distribution of types is \textit{atomless}: no specific type has a strictly positive chance of occurring.

\paragraph{Timing of Behaviors and Decisions.}  After observing $g_i,t_i$, each individual $i \in N$ chooses exactly one \textbf{behavior}, $\beta_i \in \mathcal{B}$, after which a \textbf{signal} (denoted by $s_i \in \mathcal{S}$) is generated according to a CDF, $\phi_{g_i}$.  Finally, based on $i$'s signal and group, ($s_i, g_i$), $i$ will be assigned a \textbf{decision}, $d_i \in \mathcal{D}$.  In order to consider the welfare impacts of the algorithm $\delta$ that maps $(s_i,g_i)$ into a decision $d_i$, as well as to understand how each individual $i$ chooses his or her behavior, $\beta_i$, we now turn to describing individual preferences in this setting.

\paragraph{Individual Preferences.}  Example \ref{Ex:RelativeFairness} indicates the importance of being explicit about individual preferences (or welfare).  Without imposing any structure at this point, we represent any individual $i\in N$'s \textbf{preferences} by a \textbf{payoff function} as follows:\footnote{Notice that we are assuming that all individuals in the same group have the same preferences.  However, individuals in different groups can have different preferences.  Given that we have imposed no structure on the set of groups, this is without loss of generality, because one could assign each individual $i \in N$ to his or her own group.}
\begin{equation}
\label{Eq:UtilityGeneral}
u : \mathcal{B} \times \mathcal{D} \times \mathcal{T} \times \mathcal{G} \to \mathbf{R}.
\end{equation}
We assume that $u$ is continuous with respect to all of its parameters (specifically, $t$).\footnote{This is a restriction only with respect to $\mathcal{T}$, because we have assumed that $\mathcal{B},\mathcal{D}$, and $\mathcal{G}$ are finite sets.}  Equation \eqref{Eq:UtilityGeneral} is useful from a bookkeeping perspective, because it clarifies the four factors --- $i$'s behavior ($\beta_i$), the decision received by $i$ ($d_i$), $i$'s type ($t_i$), and $i$'s group membership, $g_i$ --- that might affect any individual $i$'s welfare in our framework.  

\begin{remark}
    \textit{Note that we are assuming that each individual $i$'s preferences over $\beta_i$ and $d_i$ are independent of every other individual $j$'s choice, $\beta_j$.  This rules out strategic interactions (and conveniently implies that the equilibrium is generically unique).}
\end{remark}

\paragraph{Expected Payoffs.}

For any algorithm, $\delta$, type $t \in \mathcal{T}$, and group $g \in \mathcal{G}$, individual $i$'s expected payoff from choosing action $\beta \in \mathcal{B}$ is:
\[
EU (\beta\mid \delta, t, g) \equiv E_{d \mid \delta_s(g)} \big[ E_{s|\beta,g}\left[u(\beta,d,t,g)\right] \big]
\]
We denote individual $i$'s optimal choice by $b^*$, defined as follows:\footnote{Our analysis is not necessarily assuming that individual choice is ``optimal'' in any specific sense, due to the well-known fact that \textit{any} single choice of behavior is consistent with a non-empty set of payoff functions, and we have not yet restricted $u$ at all.  We discuss this point in more detail in Appendix \ref{Sec:RationalChoice}.}
\begin{equation}
    \label{Eq:BestResponseGeneral}
    b^*(t,g\mid \delta,\phi) \in B^*(t,g\mid \delta,\phi) \equiv \argmax_{b \in \mathcal{B}} \left[EU (\beta_i\mid \delta, t, g) \right]
\end{equation}
We assume that $b^*$ is uniquely defined except on a set, as the exact selection in such cases will irrelevant for the purposes of this article.   We also assume that, for any algorithm $\delta \in \mathcal{A}$, for each behavior $\beta\in \mathcal{B}$, and each group $g \in \mathcal{G}$, there exists an nonempty open set of $\mathcal{T}$ such that $\beta$ is an optimal choice for members of group $g$. 

\begin{assumption}
For any algorithm, $\delta \in \mathcal{A}$, and any group $g \in \mathcal{G}$, 
\begin{enumerate}
    \item \underline{Optimal Behavior Is Generically Unique}: $F_g\left(\left\{t \in \mathcal{T}: |B^*(t,g\mid \delta,\phi)|=1\right\}\right)=1$, and
    \item \underline{Each Behavior Can Be Optimal}: $\forall \beta \in \mathcal{B}$, $F_g\left(\left\{t\in \mathcal{T}: b^*(t,g\mid \delta,\phi)=\beta\right\}\right) > 0$.
\end{enumerate}.
\end{assumption}

\paragraph{Equilibrium Expected Payoffs.}  We denote any individual's \textit{equilibrium expected payoff} from an algorithm $\delta$, given type $t\in \mathcal{T}$ and group membership, $g\in \mathcal{G}$ by the following:
\[
V^*(\delta, \phi\mid t,g) \equiv EU(b^*(t,g\mid \delta,\phi),\delta,t,g).
\]
We are now in a position to define envy free algorithms.

\subsection{Envy Freeness}

As mentioned in the introduction, an algorithm $\delta$ is \textbf{envy free} if no individual $i$ would strictly prefer to change his or her group membership, $g_i$, prior to being classified by $\delta$, supposing that $i$ will choose $\beta_i=b^*(t,g\mid \delta,\phi)$.  The notion is defined formally as follows.

\begin{definition}
\label{Def:EnvyFree}
    An algorithm $\delta$ is \textbf{envy free} if, for all $t \in \mathcal{T}$ and all $g \in \mathcal{G}$, 
    \begin{equation}
    \label{Eq:EnvyFreeDefinition}
    V^*(\delta, \phi\mid t,g) \geq V^*(\delta, \phi\mid t_i,g') \;\;\; \text{ for all } g' \in \mathcal{G}.
    \end{equation}
\end{definition}
Before continuing, note that Equation \eqref{Eq:EnvyFreeDefinition}
presumes that each individual $i$ makes his or her comparison of groups \textit{after} realizing his or her individual type, $t_i$.  Satisfaction of our definition of envy freeness is unaffected by the groups' type distributions (\textit{e.g.}, $F_X$ and $F_Y$).  In other words, our definition is an \textit{interim} concept, as opposed to an \textit{ex ante} definition.\footnote{We discuss this issue in some more detail in Appendix \ref{Sec:MeasurementErrorDiscussion}.}  We now describe a scope condition to clarify and simplify our analysis.

\subsection{Group Independent Preferences} 

Equation \ref{Eq:UtilityGeneral} allows for preferences over behavior ($\beta \in \mathcal{B}$) and decisions ($d\in \mathcal{D}$) to depend on both an individual's type ($t \in \mathcal{T}$) and group ($g \in \mathcal{G}$).  While this generality is useful in many ways, our analysis will require some structure on preferences.  One such structural assumption, following the general presumption in the algorithmic fairness literature that two individuals who differ only with respect to group membership are similar in all other respects, is \textbf{group independence}.  Preferences are group independent if each individual $i \in N$'s payoffs from behavior-decision pairs \textit{do not} depend on the individual's group membership, $g_i$.  This is defined formally as follows.

\begin{definition}
\label{Def:GroupIndependence}
    Individual preferences, $u : \mathcal{B} \times \mathcal{D} \times \mathcal{T} \times \mathcal{G} \to \mathbf{R}$ satisfy \textbf{group independence} if, for all $\beta\in \mathcal{B}$, $d\in \mathcal{D}$, $t \in \mathcal{T}$, and all $g,g' \in \mathcal{G}$,
    \[
    u(\beta,d,t,g) = u(\beta,d,t,g').
    \]
\end{definition}
Definition \ref{Def:GroupIndependence} is strong, but it sets aside some philosophical issues, such as interpersonal comparisons of individual welfare.  Furthermore, the assumption is in the spirit of much of the existing work in algorithmic fairness.  For example, \cite{DworkHardtPitassiReingoldZemel12} offer a compelling argument in favor of the idea that individuals ``in different groups'' are ``similar'' to one another from a fairness standpoint.\footnote{Furthermore, in the literature on envy free notions of fairness, the assumption is very similar to the ``$L$-Lipschitz condition'' in \cite{BalcanDickNoothigattuProcaccia19} (p.5).}  We adopt a similar approach in terms of welfare-based notions of fairness, adopting the idea that a fair algorithm should induce similar individuals to choose similar actions.

\subsection{Equal Opportunity \label{Sec:EqualOpportunity}}

We refer to our second welfare-based notion of fairness as \textbf{equal opportunity}, which captures the notion that \textit{an algorithm should induce similar individuals in the same situation to make identical behavioral choices}.  As mentioned above, we adopt the standpoint that two individuals who differ only in their group membership are presumptively similar to one another.  Accordingly, equal opportunity is formally defined as follows.

\begin{definition}[Equal Opportunity]
\label{Def:EqualOpportunity}
    An algorithm $\delta \in \mathcal{A}$ satisfies \textbf{equal opportunity} (EO) if 
    \[
    B^*(t,g\mid \delta,\phi) = B^*(t,g'\mid \delta,\phi).
    \]
    % there exists a selection $b^*: \mathcal{T} \times \mathcal{G} \to \mathcal{B}$ with $b^* \in B^*(t_i,g\mid  \delta,\phi)$ for all $g \in \mathcal{G}$ such that, for all types $t_i \in \mathcal{T}$ and all groups $g,g'\in \mathcal{G}$, 
    % \[
    % b^*(t_i,g \mid \delta,\phi) = b^*(t_i,g' \mid \delta,\phi)
    % \]
\end{definition}
With equal opportunity defined, we now explore the connections between it, envy freeness, and error rate balance.

\subsection{Envy Freeness, Equal Opportunity, \& Error Rate Balance in BCPs}

Our first and most general result establishes that if individual preferences are group independent, then envy freeness, equal opportunity, and error rate balance are equivalent in \textit{any} binary classification problem.
\begin{theorem}
\label{Th:EnvyFreeEqualOpportunityIndependence}
    In any BCP, if individuals' preferences satisfy group independence, then
    \begin{enumerate}
        \item $\delta$ satisfies envy freeness only if $\delta$ satisfies EO and
        \item $\delta$ satisfies ERB if and only if $\delta$ satisfies envy freeness.
    \end{enumerate}
    % In any BCP in which individual preferences satisfy group independence, the following are equivalent for an algorithm, $\delta$:
    % \begin{enumerate}
    %     \item $\delta$ satisfies envy freeness, 
    %     \item $\delta$ satisfies equal opportunity, and
    %     \item $\delta$ satisfies error rate balance.
    % \end{enumerate}
\end{theorem}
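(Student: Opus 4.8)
The plan is to reduce every notion in the statement to the single real pair
\[
\bigl(p_0(g),p_1(g)\bigr),\qquad p_\beta(g)\equiv\Pr[d_j=1\mid\beta_j=\beta,\delta,g],
\]
that the algorithm induces for each group $g\in\{X,Y\}$. From Table \ref{Tab:ConfusionMatrixBaseline} (or a one-line computation) one gets $p_1(g)=\phi_g\delta_1(g)+(1-\phi_g)(1-\delta_0(g))$ and $p_0(g)=\phi_g(1-\delta_0(g))+(1-\phi_g)\delta_1(g)$, so ERB (Equations \eqref{Eq:PositiveErrorBalance}--\eqref{Eq:NegativeErrorBalance}) is literally the statement $p_0(X)=p_0(Y)$ and $p_1(X)=p_1(Y)$. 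Under group independence, writing $u(\beta,d,t)$ for the common payoff, $EU(\beta\mid\delta,t,g)=p_\beta(g)\,u(\beta,1,t)+(1-p_\beta(g))\,u(\beta,0,t)$, which is affine in $p_\beta(g)$ and otherwise independent of $g$; hence $B^*(t,g\mid\delta,\phi)$ and $V^*(\delta,\phi\mid t,g)$ depend on $g$ only through $(p_0(g),p_1(g))$. The ``easy'' implications follow at once: if the two pairs coincide then $EU(\beta\mid\delta,t,X)=EU(\beta\mid\delta,t,Y)$ for both behaviors and all $t$, so the argmax sets agree (Definition \ref{Def:EqualOpportunity}) and the maxima agree, giving \eqref{Eq:EnvyFreeDefinition} with equality. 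This yields ERB $\Rightarrow$ EO and ERB $\Rightarrow$ envy freeness.

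For the remaining implication --- envy freeness $\Rightarrow$ ERB, which via the easy direction also gives envy freeness $\Rightarrow$ EO --- note that with $\mathcal G=\{X,Y\}$, \eqref{Eq:EnvyFreeDefinition} is equivalent to $V^*(\delta,\phi\mid t,X)=V^*(\delta,\phi\mid t,Y)$ for every $t$. Fix the behavior $\beta=1$. By the Assumption there is a nonempty open set $I^X_1\subseteq\mathcal T$ on which $1$ is the unique best response for group $X$, so there $V^*(\delta,\phi\mid t,X)=EU(1\mid\delta,t,X)$, and envy freeness forces $EU(1\mid\delta,t,X)=V^*(\delta,\phi\mid t,Y)\ge EU(1\mid\delta,t,Y)$, i.e.\ $\bigl(p_1(X)-p_1(Y)\bigr)\bigl(u(1,1,t)-u(1,0,t)\bigr)\ge 0$ on $I^X_1$; the symmetric argument on $I^Y_1$ gives the reverse inequality there. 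I would then use continuity of $u$ in $t$ together with connectedness of $\mathcal T=\mathbf R$: since $I^X_1$ and $I^X_0$ are disjoint nonempty open sets, there is a ``crossing'' type $\tau_X$ at which group $X$ is exactly indifferent between behaviors and which is approached by both $I^X_1$ and $I^X_0$; evaluating the identity $V^*(\delta,\phi\mid\cdot,X)=V^*(\delta,\phi\mid\cdot,Y)$ at $\tau_X$ and at types just to either side (where each group's best response is strict) pins down $p_1(X)=p_1(Y)$, and the same argument with $\beta=0$ gives $p_0(X)=p_0(Y)$. With the reduction above this is ERB, and with the easy direction both claims of the theorem follow.

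The main obstacle is exactly this last step. The local consequences of envy freeness only say that $\bigl(p_1(X)-p_1(Y)\bigr)\bigl(u(1,1,t)-u(1,0,t)\bigr)$ is signed one way on $I^X_1$ and the other way on $I^Y_1$, which on its own does not force $p_1(X)=p_1(Y)$: it is vacuous wherever the agent is indifferent over decisions given behavior $1$ (that is, $u(1,1,t)=u(1,0,t)$). Ruling out the degenerate configuration in which such indifference persists on all the relevant sets is where the regularity hypotheses must be used in full strength --- the Assumption that \emph{every} behavior is a best response on a nonempty open set for \emph{every} algorithm, continuity of $u$ in $t$, and full support on a connected $\mathcal T$ --- to guarantee the crossing types at which the value identity can be ``differentiated'' across groups. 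Getting this bookkeeping exactly right, in particular handling the case in which an indifference region for one group is an interval rather than an isolated point, is the delicate part; everything else reduces to the affine computation above.
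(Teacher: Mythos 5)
Your reduction to the pair $\bigl(p_0(g),p_1(g)\bigr)$ and the affine expression for $EU(\beta\mid\delta,t,g)$ is exactly the mechanism of the paper's own proof, which is carried out for general $k$-symmetric problems as Theorem \ref{Th:EnvyFreeERB} in Appendix \ref{Sec:GeneralClassification} and then specialized to $k=2$. The directions ERB $\Rightarrow$ envy freeness and ERB $\Rightarrow$ EO are handled there precisely as you handle them: ERB makes the weights $\Pr[d\mid\beta,\delta,g]$ group-invariant, so the whole objective is group-invariant. One structural simplification you miss in the hard direction: since with two groups envy freeness forces \emph{equality} of values and (via part 1) a common best-response correspondence, on the open set where $\beta=1$ is the common unique best response the value identity already reads $\bigl(p_1(X)-p_1(Y)\bigr)\bigl(u(1,1,t)-u(1,0,t)\bigr)=0$ as an exact equality, not merely the pair of one-sided inequalities you derive separately on $I^X_1$ and $I^Y_1$. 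This is what the paper does --- fix a $t$ with $u(1,b^*(t\mid\delta),t)>u(0,b^*(t\mid\delta),t)$, normalize, read off equality of true positive rates, repeat for $\beta=0$ --- and it makes your crossing-type construction unnecessary.

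The obstacle you flag at the end, however, is genuine and is \emph{not} removable by the stated regularity hypotheses, so your proof cannot be completed as written --- but neither can the paper's. To conclude $p_1(X)=p_1(Y)$ one needs a type that (i) optimally chooses $\beta=1$ and (ii) strictly cares about the decision conditional on $\beta=1$, i.e.\ $u(1,1,t)\neq u(1,0,t)$. The Assumption delivers (i) on a set of positive measure, but nothing delivers (ii): continuity and full support on a connected $\mathcal{T}$ do not help, because $u(1,1,\cdot)-u(1,0,\cdot)$ may vanish identically. Concretely, $u(\beta,d,t)=-t\cdot\beta$ satisfies group independence, continuity, and both parts of the Assumption, yet renders \emph{every} algorithm envy free (and EO), including algorithms that grossly violate ERB. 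The paper's proof of Theorem \ref{Th:EnvyFreeERB} simply posits a $t$ with the required strict inequality without establishing its existence, and the paper itself exhibits the failure mode in Proposition \ref{Pr:EnvyFreenessERBNotEquivalent}, where group-independent but non-strict preferences ($\lambda_X=\lambda_Y=0$) admit an envy free algorithm violating ERB. So the ``delicate bookkeeping'' you were hoping to carry out is not a delicacy at all: it is a missing strictness hypothesis in the statement, and no amount of care with crossing types will close it.
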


\paragraph{Discussion of Theorem \ref{Th:EnvyFreeEqualOpportunityIndependence}.} It is clear that group independence of preferences is central to the validity of Theorem \ref{Th:EnvyFreeEqualOpportunityIndependence}.   This dependence is arguably not too troubling from the perspective of algorithmic fairness, because group independence seems a natural starting point for motivating a search for ``fair'' algorithms. More surprisingly, Theorem \ref{Th:EnvyFreeEqualOpportunityIndependence} relies on the presumption that the classification problem is a \textit{binary}.  In classification problems with more than two decisions ($|\mathcal{D}|\geq 3$) there are always algorithms that satisfy equal opportunity but violate error rate balance. 

\begin{proposition}
\label{Pr:EnvyFreeTightness}
If the classification problem is ternary, or any higher order, then there exists an algorithm $\delta$ that satisfies equal opportunity, but violates ERB.
% The hypotheses of Theorem \ref{Th:EnvyFreeEqualOpportunityIndependence} are tight.  If either or both of the following are relaxed:
%     \begin{enumerate}
%         \item The binary classification structure of the decision problem and/or
%         \item The satisfaction of group independence by $u$,
%     \end{enumerate}
%     then there exists an algorithm $\delta$ that satisfies equal opportunity, but violates ERB.
\end{proposition}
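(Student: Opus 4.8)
The plan is to exhibit, for $\mathcal{D}=\{0,1,2\}$ (the ternary case; the higher-order cases follow by never using the extra decisions), a binary classification problem together with an algorithm $\delta$ whose induced best-response correspondences coincide across groups --- so $\delta$ satisfies EO --- yet whose conditional decision distributions differ across groups, so $\delta$ violates ERB. The conceptual point driving the construction is that once $|\mathcal{D}|\geq 3$ the decision space can contain directions that are \emph{payoff-irrelevant} to every individual, and an algorithm may use those directions to treat the two groups differently in a statistically detectable way without altering anyone's behavioral incentives. (When $|\mathcal{D}|=2$ no such slack exists, which is exactly why Theorem \ref{Th:EnvyFreeEqualOpportunityIndependence} holds.)

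Concretely, I would take $\phi_X=\phi_Y=\phi$ and choose group-independent preferences under which decisions $1$ and $2$ are payoff-equivalent: for instance $u(\beta,d,t)=v(\beta,d)-t\cdot\mathbf{1}[\beta=1]$ with $v(\beta,0)=0$ and $v(\beta,1)=v(\beta,2)=1$ for $\beta\in\{0,1\}$. Since $t$ has full support on $\mathbf{R}$, for every algorithm the quantity $EU(1\mid\delta,t,g)-EU(0\mid\delta,t,g)$ is a strictly decreasing, unbounded affine function of $t$, so each behavior is optimal on a set of positive measure; hence the standing assumptions hold and this is a legitimate ternary BCP. Now define $\delta$ so that group $X$ is never assigned decision $2$ and group $Y$ is never assigned decision $1$, with the remaining mass matched signal-by-signal: $\delta_{s,0}(X)=\delta_{s,0}(Y)$ and $\delta_{s,2}(Y)=\delta_{s,1}(X)$ for each $s\in\{0,1\}$, with $\delta_{s,1}(X)\in(0,1)$ for at least one signal $s$.

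The verification has two short steps. First, EO holds: because $u(\beta,1,t)=u(\beta,2,t)$ and $\Pr[s\mid\beta]$ does not depend on the group, a direct substitution shows that $EU(\beta\mid\delta,t,X)$ and $EU(\beta\mid\delta,t,Y)$ reduce to the \emph{same} expression in $t$ for every $\beta$, so $B^{*}(t,X\mid\delta,\phi)=B^{*}(t,Y\mid\delta,\phi)$ for all $t$, which is precisely equal opportunity. Second, ERB fails: since $\phi\geq\tfrac12>0$, taking $s$ to be the signal with $\delta_{s,1}(X)>0$ (say $s=0$) we get $\Pr[d_j=1\mid\beta_j=0,\delta,X]\geq\phi\cdot\delta_{0,1}(X)>0$, whereas $\Pr[d_j=1\mid\beta_j=0,\delta,Y]=0$ by construction, so the conditional-distribution equalities defining ERB are violated. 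For $|\mathcal{D}|\geq 4$ one uses the same preferences and algorithm, assigning probability zero to decisions $3,4,\dots$; EO still holds and ERB still fails.

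I expect the only real difficulty to be bookkeeping rather than mathematical depth: one must (a) commit to the natural extension of ERB (and of the algorithm space $\mathcal{A}$) to $|\mathcal{D}|\geq 3$ used in Appendix \ref{Sec:GeneralClassification}, namely equality across groups of $\Pr[d_j=d\mid\beta_j=\beta,\delta,g]$ for every $d\in\mathcal{D}$ and $\beta\in\mathcal{B}$, and (b) confirm that the exhibited BCP satisfies the standing assumptions for \emph{every} algorithm, not only the one constructed --- which is why I build in the unbounded, type-dependent cost of behavior $1$. It is also worth flagging \emph{why} a slicker example with all three decisions payoff-distinct does not work: when $\phi_X=\phi_Y$ and the payoff vectors $\big(u(\beta,0,t),u(\beta,1,t),u(\beta,2,t)\big)$ vary richly enough in $t$, equality of the best-response correspondences across groups forces equality of $\Pr[d\mid\beta,\cdot]$ and hence ERB; exploiting a payoff-irrelevant decision (or, alternatively, unequal signal accuracies together with a suitable degeneracy) is what creates the gap.
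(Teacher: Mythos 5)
Your construction is correct and is essentially the paper's own argument: both proofs exploit the fact that once $|\mathcal{D}|\geq 3$ the decision space contains payoff-equivalent decisions, so an algorithm can distribute probability mass over them differently across the two groups (violating ERB) while leaving every expected payoff, and hence every best response, unchanged (preserving EO). The only differences are cosmetic --- the paper makes the $m-1$ non-rewarded decisions interchangeable and has group $Y$ shuffle them, whereas you split the rewarded decision into two labels used by different groups --- plus your unbounded type-dependent cost term, which in fact handles the standing assumptions more carefully than the paper's own example does.
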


\paragraph{Discussion of Proposition \ref{Pr:EnvyFreeTightness}.}  Proposition \ref{Pr:EnvyFreeTightness} establishes that equivalence of envy freeness and error rate balance does not hold in higher order classification problems.  This is an important qualification of Theorem \ref{Th:EnvyFreeEqualOpportunityIndependence} because many contributions to algorithmic fairness focus on binary classification problems.  The reliance of Theorem \ref{Th:EnvyFreeEqualOpportunityIndependence} on the binary nature of the classification problem is in part due to the fact that a $k$-order classification problem has $k(k-1)$ ``types of errors,'' implying that the proper definition of ERB is unclear.\footnote{For asymmetric classification problems (where $|\mathcal{B}|\neq|\mathcal{B}|$, the number of errors in a classification problem is $|\mathcal{B}|\cdot (|\mathcal{D}|-1)$. We address general symmetric classification problems in Appendix \ref{Sec:GeneralClassification}.}  

\begin{remark}
    \textit{Note that any score that has more than two values falls within the bounds of Theorem \ref{Th:EnvyFreeEqualOpportunityIndependence} if and only if the set of scores can be divided into two mutually exhaustive subsets of scores that each contain yield equal lotteries over the set of decisions, $\mathcal{D}$.}
\end{remark}

% That said, to better understand the limits of Theorem \ref{Th:EnvyFreeEqualOpportunityIndependence}, we now explore the relationship between envy freeness and error rate balance in some more detail.

\section{Prejudice Freeness}

Envy freeness and equal opportunity are each based on the idea that a fair algorithm should not differentially reward people who differ only with respect to their group membership, but both notions are focused on the \textit{upstream consequences} of an algorithm in the sense of how it affects individual incentives with respect to behavioral choices.  Many algorithms in the real-world have \textit{downstream consequences} as well.  These are consequences that are conditional on the decision rendered by the algorithm, holding the individual's behavioral choice, $\beta_i$, fixed.   

To consider the downstream consequences of any algorithm in a succinct fashion, suppose that each individual $i$ has an induced \textit{ex post} preference over $d \in D$ as follows:
\begin{equation}
\label{Eq:ExPostPayoffs}
v_i(d_i\mid \delta, g_i) = \sum_{\beta \in \mathcal{B}} \Pr[\beta\mid d_i, \delta, g_i] \cdot w_{g_i}(d_i,\beta),
\end{equation}
where $w_{g_i}(d_i,\beta)\in \mathbf{R}$ is exogenous and known for every $g_i\in \mathcal{G}, d_i\in \mathcal{D}$, and $\beta\in \mathcal{B}$.  In a binary classification setting, we can normalize and represent the \textit{ex post} payoff defined in Equation \eqref{Eq:ExPostPayoffs} more simply as follows:
\[
v_i(d_i\mid \delta, g_i) = \Pr[\beta_i=1\mid d_i, \delta, g_i] \cdot w_{g_i},
\]
where $w_{g_i}>0$ is exogenous and known for each $g \in \mathcal{G}$.  With these conditional \textit{ex post} payoffs in hand, we now define the notion of a \textbf{prejudice free} algorithm.

\begin{definition}
\label{Def:PrejudiceFree}
An algorithm $\delta$ is \textbf{prejudice free} (PF) if, for all $i \in N$, 
\[
v_i(d\mid \delta, g_i) \geq v_i(d\mid \delta, g') \;\;\; \text{ for all } g' \in \mathcal{G} \text{ and } d \in \mathcal{D}.
\]
\end{definition}
Just as an envy free algorithm $\delta$ makes individuals not want to switch their group membership, $g_i$, conditional upon individual $i$'s type, $t_i$, \textit{a prejudice free algorithm $\delta$ induces individuals to not gain from switching their group membership conditional on the decision, $d_i$, they are assigned by the algorithm.}

\paragraph{Equal Consequences.}  Letting $w\equiv \{w_g\}_{g\in \mathcal{G}}$ denote the profile of \textit{ex post} weights, we now define a condition on \textit{ex post} preferences that is analogous to group independence, above.
\begin{definition} 
\label{Def:EqualConsequences}
The profile of individual preferences, $w$, satisfies \textbf{equal consequences} if $w_g=w_{g'}$ for all $g,g' \in \mathcal{G}$.
\end{definition}
For reasons of space, equal consequences is effective ``black boxing'' the post-classification choices and incentives of individuals.  We present a more specified choice model providing a microfoundation for equal consequences in Appendix \ref{Sec:MicrofoundationPrejudiceFreeness}.  Leaving that for the interested reader, we can now formally state that equal consequences is necessary and sufficient for prejudice freeness and predictive parity to be equivalent in any binary classification problem.

\begin{theorem}
\label{Th:PrejudiceFreeEqualsPredictiveParity} 
Prejudice freeness and predictive parity are equivalent in any BCP if and only if $w=(w_X,w_Y)$ satisfies equal consequences.
\end{theorem}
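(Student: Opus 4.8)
}
The plan is to collapse both fairness notions to elementary conditions on the four posteriors $h(d,g) \equiv \Pr[\beta=1\mid d,\delta,g]$ (with $d\in\{0,1\}$, $g\in\{X,Y\}$, suppressing the dependence on $\delta,\pi_g,\phi_g$), and then compare them as $w$ varies. First I would rewrite prejudice freeness at the group level. By Definition \ref{Def:PrejudiceFree}, $\delta$ is PF precisely when, for every $i\in N$, every $g'\in\mathcal{G}$, and every $d\in\mathcal{D}$, $h(d,g_i)\,w_{g_i}\ge h(d,g')\,w_{g'}$. Since the population contains members of both groups, applying this to an $i$ with $g_i=X$ (against $g'=Y$) and to an $i$ with $g_i=Y$ (against $g'=X$) yields $h(d,X)\,w_X=h(d,Y)\,w_Y$ for every $d$; conversely that system of equalities makes every PF inequality hold with equality. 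Thus PF is equivalent to $h(d,X)\,w_X=h(d,Y)\,w_Y$ for all $d\in\mathcal{D}$, whereas predictive parity is, by definition, $h(d,X)=h(d,Y)$ for all $d\in\mathcal{D}$.

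The ``if'' direction is then immediate: if $w$ satisfies equal consequences, then $w_X=w_Y=w>0$, so for each $d$ the equation $h(d,X)\,w=h(d,Y)\,w$ holds exactly when $h(d,X)=h(d,Y)$; hence the set of PF algorithms and the set of PP algorithms coincide in every BCP.

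For the ``only if'' direction I would argue the contrapositive: assuming $w_X\ne w_Y$, I exhibit one BCP in which PF and PP are not equivalent. Take the symmetric BCP with $\pi_X=\pi_Y\in(0,1)$ and $\phi_X=\phi_Y\in[1/2,1]$, and let $\delta$ be the ``follow-the-signal'' algorithm, with $\delta_s(g)=1$ for all signals $s$ and groups $g$. By symmetry $h(d,X)=h(d,Y)$ for both values of $d$, so $\delta$ satisfies PP; but $h(1,X)=\pi_X\phi_X/\big(\pi_X\phi_X+(1-\pi_X)(1-\phi_X)\big)>0$, so $h(1,X)\,w_X=h(1,Y)\,w_X\ne h(1,Y)\,w_Y$, and therefore $\delta$ violates PF. Hence equivalence fails in this BCP, which establishes the contrapositive. (Symmetrically, one could instead solve $h(d,Y)=h(d,X)\,w_X/w_Y$ for group $Y$'s parameters to obtain a PF algorithm that is not PP, but the example above suffices.)

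I do not anticipate a substantive obstacle. The only two points that need care are (i) the reduction of the individual-level PF definition to the two-sided group-level equality, which relies on both groups being inhabited, and (ii) checking in the counterexample that the conditioning events $\{d_j=d\}$ carry positive probability and that $h(1,X)>0$, so that the relevant posteriors are well-defined and the strict inequality $h(1,X)\,w_X\ne h(1,X)\,w_Y$ genuinely witnesses the failure of PF --- both of which hold for the symmetric follow-the-signal configuration.
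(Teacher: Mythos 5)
Your proof is correct, and it is worth flagging that the paper's own proof environment for Theorem \ref{Th:PrejudiceFreeEqualsPredictiveParity} is left empty; the nearest argument in the paper is the proposition in Appendix \ref{Sec:MicrofoundationPrejudiceFreeness}, which derives the equivalence of prejudice freeness, equal consequences, and predictive parity in BCPs from a reviewer-based microfoundation under a group-blindness hypothesis. That appendix argument only delivers the sufficiency half of the theorem as stated in the body (equal consequences implies the equivalence of PF and PP); the necessity direction is not proved anywhere in the paper. Your route is more elementary and more complete: you reduce prejudice freeness to the system $h(d,X)\,w_X = h(d,Y)\,w_Y$ for all $d$ (using that both groups are inhabited, so the one-sided inequalities from Definition \ref{Def:PrejudiceFree} pair up into equalities), observe that predictive parity is the same system with the weights stripped out, and then, for necessity, exhibit a symmetric follow-the-signal BCP in which the common posterior $h(1,\cdot)$ is strictly positive, so that unequal weights break prejudice freeness while predictive parity survives. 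The care you take to verify $h(1,X)>0$ is exactly the right check: in the degenerate case $h(d,X)=h(d,Y)=0$ the weighted equalities would hold vacuously for any $w$ and would not witness a failure of equivalence. The only point deserving the explicit remark you give it is the passage from the individual-level quantifier in Definition \ref{Def:PrejudiceFree} to the two-sided group-level equality, which requires both groups to be nonempty; with that noted, the argument is complete and supplies the half of the theorem the paper leaves unproved.
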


The sufficiency of equal consequences is simple to see, but its necessity is a little more surprising.   That said, the necessity follows from the fact that equal consequences \& prejudice freeness are each based satisfying equalities.  The equivalence requires the problem to be a BCP because specifying one of the posterior beliefs (\textit{i.e.}, choosing a $d \in \mathcal{D}$ with $|\mathcal{D}|>2$) for a particular group does not identify the posterior beliefs for any of the other $|\mathcal{D}|-1$ decisions that the individual might be awarded.  We are now in a position to define equal consequences, which is analogous to the notion of equal opportunity discussed above.  We now turn to some conclusions and possible directions for future work.

\section{Conclusion}

Algorithms are increasingly used in the real-world to make both inferences about, and decisions regarding, individuals. 
 When this is the case, expectations about the algorithm --- and how it responds to changes in one's choices --- shape behavior and, accordingly, the data that the algorithm itself is charged with analyzing. This phenomenon of \textbf{algorithmic endogeneity} has begun to attract scholarly attention, but the connection between the starting points for concerns about algorithmic fairness and modern scientific theories of choice has been largely unexplored to date.  

 Addressing issues of algorithmic endogeneity requires that we focus on the incentives of individuals subject to, and individuals relying on decisions from, algorithms when making their own individual choices.  In this article, we have considered some of these issues, particularly by explicitly modeling the \textit{ex ante}/\textit{interim} decisions that individuals might need to make prior to an algorithm classifying individuals and/or the \textit{ex post} decisions one might make based upon the algorithm's classifications.  We have described two welfare-based notions of fairness, \textbf{envy freeness} and \textbf{prejudice freeness}. The former is a concept aimed at addressing \textit{fairness of the algorithm's upstream incentives}.  The latter, on the other hand,  considers \textit{fairness of the algorithm's downstream effects}.

  \paragraph{Envy Free Algorithms.} Envy freeness is a classic concept and we are not the first to propose it in the context of algorithmic fairness (\textit{e.g.}, \cite{ZafarValarRodriguezGummadiWeller17}, \cite{BalcanDickNoothigattuProcaccia19}).  We show that, in any BCP in which individuals' preferences do not depend on their group membership, envy freeness is equivalent to equal opportunity, which is equivalent to satisfaction of error rate balance (Theorem \ref{Th:EnvyFreeEqualOpportunityIndependence}).  

  In our framework, the only  \textit{observable} individual characteristic that we treat as ``manipulable'' (at least within the context of envy freeness) is one's group membership.  But the point we drawn from envy freeness is more general: the notion of envy freeness is essentially equivalent to what is known as \textit{strategy proofness} in the mechanism design literature.  Accordingly, envy freeness is an appropriate starting point for considering which (if any algorithms) eliminate individuals' incentives to ``lie to'' the algorithm. In this way, our results are connected to other parallel work (\textit{e.g.}, \cite{FrankelKartik19,FrankelKartik22}, \cite{DesseinFrankelKartik23}).  A general point in this regard is the link to our notion of error opportunity.  In binary problems, this fairness notion reduces to error rate balance.  
 
 \paragraph{Prejudice Free Algorithms.} Using a very similar logic to that underlying Theorem \ref{Th:EnvyFreeEqualOpportunityIndependence} we show that, if two groups have identical preferences over the inferences that are drawn about their behavior from the decision awarded by a binary classification algorithm, then the downstream consequences of the algorithm satisfy predictive parity if and only if the algorithm is prejudice free (Theorem \ref{Th:PrejudiceFreeEqualsPredictiveParity}).  While we conceive of Theorem \ref{Th:EnvyFreeEqualOpportunityIndependence} as speaking to debates about structural inequality (\cite{PennPatty23}), we think that Theorem \ref{Th:PrejudiceFreeEqualsPredictiveParity} is more closely related to the \textit{perpetuation} of inequality (see, for example, \cite{Fryer07}).  

 \paragraph{Another Impossibility Theorem.} Even with the caveats (group independence, group blindness, and binary classification problems) taken into account, it is nonetheless clear that, mirroring \cite{KleinbergMullainathanRaghavan16}, \cite{Chouldechova17}, our results imply that it is generally impossible for an algorothm to be both envy free and prejudice free.  We believe that our arguments both clarify the impossibility theorem's relevance for social scientists and demonstrate that the impossibility result is valid even in the face of algorithmic endogeneity.  Given the importance of the impossibility theorem in general (and particularly in terms of attracting our attention to the topic), it is useful to at least briefly describe a recent possibility result that illuminates the logic of the impossibility theorem.

%  This can be stated (without proof\footnote{We state the result without proof because the arguments above, combined with proofs of \cite{KleinbergMullainathanRaghavan16} and \cite{Chouldechova17}, immediately yield the result.}) as follows:
%  \begin{theorem}
%     \label{Th:ImpossibilityImplication}
%      Consider any BCP with $\mathcal{B}, \mathcal{D}, \mathcal{X} = \{0,1\}$,  with 
%      \begin{eqnarray}
%         F_X(t) \in (0,1) & \text{ and } & F_y(t) \in (0,1) \text{ for all } t \in \mathbf{R}, \label{Eq:ImpossibilityImplication0}\\
%          F_X(t) & = & F_Y(t) \text{ \emph{almost nowhere} and} \label{Eq:ImpossibilityImplication1}\\
%          \max \left[\phi_X, \phi_Y\right] & < & 1. \label{Eq:ImpossibilityImplication2}
%      \end{eqnarray}  The set of individual preferences, $(u, w)$, such that there exists an algorithm $\delta$ that satisfies both envy freeness and prejudice freeness possesses Lebesgue measure zero.  Furthermore, if $F_X(t)\neq F_Y(t)$ for all $t \in \mathbf{R}$, then no algorithm $\delta$ can simultaneously satisfy both envy freeness and prejudice freeness.
%  \end{theorem}
% Conditions \eqref{Eq:ImpossibilityImplication0} \& \eqref{Eq:ImpossibilityImplication2} jointly rule out the possibility of $\delta$ being a perfect classifier.  Accordingly, an algorithm $\delta$ can satisfy both envy freeness and prejudice freeness only if equal base rates is satisfied.  Satisfying equal base rates requires that members of each group  thresholds, $t_X^$ and $t_Y^*$ such that use Condition \eqref{Eq:ImpossibilityImplication1} implies that equal base rates 
 
\paragraph{Implications of \cite{ReichVijaykumar21}'s Possibility Result.}  Recent work by \cite{ReichVijaykumar21} suggests that Theorem \ref{Th:ImpossibilityImplication} might be misinterpreting \cite{KleinbergMullainathanRaghavan16} and \cite{Chouldechova17}'s main results but, honestly, we have not yet been able to gain a clear picture of the relationship between \cite{ReichVijaykumar21}'s possibility theorem and the impossibility result.  As we read \cite{ReichVijaykumar21}, the angle they take is to separate the classification process into two stages, where a calibrated score based on the observable traits of the individual in question (satisfying predictive parity) is handed off to a binary classifier (satisfying error rate balance) that uses only this score to assign a decision to the individual.  \cite{ReichVijaykumar21} demonstrate that this approach can, with sufficiently high quality data, at least partially sidestep the negative conclusions of the impossibility result.  What we are not entirely clear on at this point (and it's surely our own fault) is why such a ``two-step approach'' can not be represented as a ``one-step approach'' that fits within the results of either \cite{KleinbergMullainathanRaghavan16} or \cite{Chouldechova17}.

 \paragraph{Extending Preferences.} Our results impose a reasonable, but strong, condition on individuals' preferences.  Group independence and group blindness can each be violated in practice (\textit{e.g.}, medical care).  One could relax this condition by pursuing a more ``purely utilitarian'' approach (as in \cite{BlandinKash21}).  One issue with moving in this direction, however, is that measuring fairness only in  welfare terms while also allowing individual welfare to differ across groups imposes no limits on what one \textit{might} consider ``fair.''  

 \paragraph{Understanding Fights Over Fairness.} This leads to the final (and, in our opinion, more productive) avenue for future work: \textit{fairness} is a \textit{concept}.  Satisfaction of the concept is a theoretical matter, but \textit{raising} the issue of fairness (for example, to contest the failure to be awarded a pay raise, promotion, admission, \textit{etc.}) is simultaneously a theoretical and empirical matter. 
 This raises issue beyond ``transaction cost economics.''  While ``bringing up fairness'' in a negotiation is potentially costly, the bigger issue (to us) is \textit{where} the fairness arises from.  Our model (as in others, including \cite{BlandinKash21}) presumes that the quality of an outcome for an individual is independent of the behaviors chosen by \textit{other} individuals.  
 
 This assumption is not silly: many individual decisions are essentially decision-theoretic in the sense that the payoff from the decision in question is independent of other individuals' decisions.  But, there are many examples in which this independence is not satisfied, including hiring, traffic predictions, voter identification laws, and hurricane warnings.  For example, suppose that a city government learns that a hurricane will strike the city in 72 hours.  Should the city announce this?  Reasonable arguments suggest yet, but this leaves the secondary question of whether \textit{all} individuals should be warned at the same time.  If the city announces the alert to all at once, then it is reasonable to suppose that there will be a high likelihood of the alert causing traffic congestion on the road(s) out of town, possibly leading to fewer people being able to get out of time in time due to traffic jams and other congestion effects.

 Ultimately, the consideration of fairness in social science of decisions (algorithmic or otherwise) \textit{without} reference to the context of the decision itself, is quite possibly a hopeless endeavor.  We hope that our presentation and analysis above clarifies at least some of the aspects of the ``context'' that need to be taken into account to move the discussion of both algorithmic endogeneity and fairness beyond ``merely'' statistical measures of parity.

\newpage

\appendix

\noindent \textbf{Online Appendix for ``Algorithmic Fairness with Feedback''}\\
John W. Patty (Emory) and Elizabeth Maggie Penn (Emory)\\
\today

\section{Proofs of Numbered Results \label{Sec:Proofs}}

\noindent \textbf{Proposition \ref{Pr:AccuracyWelfare}}
    \textit{Suppose that $|\lambda_i|=|\gamma_i|\neq 0$.  Then $i$'s expected payoff is }
    \begin{itemize}
        \item \textit{increasing in the accuracy of $\delta$ if $\gamma_i=\lambda_i<0$,}
        \item \textit{decreasing in the accuracy of $\delta$ if $\gamma_i=\lambda_i>0$, and}
        \item \textit{not measurable with respect to accuracy alone if $\lambda_i = - \gamma_i \neq 0$.}
    \end{itemize}
\begin{proof}
    Follows from the argument in Example \ref{Ex:RelativeFairness} and Equation \eqref{Eq:NormalizedExpectedPayoffs}.
\end{proof}

\noindent \textbf{Theorem \ref{Th:EnvyFreeEqualOpportunityIndependence}}
    \textit{In any BCP, if individuals' preferences satisfy group independence, then
    \begin{enumerate}
        \item $\delta$ satisfies envy freeness only if $\delta$ satisfies EO and
        \item $\delta$ satisfies ERB if and only if $\delta$ satisfies envy freeness.
    \end{enumerate}}
\begin{proof}
    Follows from Theorem \ref{Th:EnvyFreeERB}, in Appendix \ref{Sec:GeneralClassification}.
\end{proof}

\noindent \textbf{Theorem \ref{Th:PrejudiceFreeEqualsPredictiveParity}}
\textit{Prejudice freeness and predictive parity are equivalent in any BCP if and only if $w=(w_X,w_Y)$ satisfies equal consequences.}
\begin{proof}
    
\end{proof}

\noindent \textbf{Proposition \ref{Pr:EnvyFreeTightness}}
    \textit{The hypotheses of Theorem \ref{Th:EnvyFreeEqualOpportunityIndependence} are tight.  If either or both of the following are relaxed:}
    \begin{enumerate}
        \item \textit{The binary classification structure of the decision problem and/or}
        \item \textit{The satisfaction of group independence by $u$,}
        \end{enumerate}
        \textit{then there exists an algorithm $\delta$ that satisfies equal opportunity, but violates ERB.}
\begin{proof}
The definition of ERB when the set of decisions is larger than two (\textit{i.e.}, $|\mathcal{D}|\geq 3$) is more demanding than for BCPs.  Supposing for simplicity that $\mathcal{B}=\mathcal{D}$ and that $\mathcal{G}=\{X,Y\}$, a general definition of ERB for an algorithm $\delta$ is as follows:
\[
\Pr[d_i=d\mid \beta_i,\delta,X] =
\Pr[d_i=d\mid \beta_i,\delta,Y], \;\;\; \text{ for all } (\beta_i,d) \in \mathcal{B} \times \mathcal{D}.
\]
Suppose that $\mathcal{B}=\mathcal{D}=\{1,2,\ldots,m\}$ with $m\geq 3$, and individual preferences, $u$, satisfy group independence as follows:
\[
u(\beta_i,d_i) = \begin{cases}
    1 & \text{ if } \beta_i=d_i=1,\\
    0 & \text{ otherwise.}
\end{cases}
\]
Then, suppose that the signal distribution (over $\mathcal{S}=\mathcal{B}$) is identical for the two groups and defined as follows for some $\phi\in \left(\frac{1}{m},1\right)$:
\[
\Pr[s_i = \beta \mid \beta_i] = \begin{cases}
\phi & \text{ if } \beta=\beta_i,\\
\frac{1-\phi}{m-1} & \text{ otherwise.}
\end{cases}
\]
Now, consider the algorithm 
\[
\delta_{s_i}(g) = \begin{cases}
  1 & \text{ if } d_i=s_i \text{ and } g=X,\\
  0 & \text{ if } d_i\neq s_i \text{ and } g=X,\\
  1 & \text{ if } d_i=s_i=1 \text{ and } g=Y,\\
  0 & \text{ if } d_i\neq s_i=1 \text{ and } g=Y,\\
  \frac{1}{m-1} & \text{ for each } d_i \neq s_i \in \mathcal{D} \; \text{ if } s_i \neq 1 \text{ and } g=Y.
\end{cases}
\]
The expected payoff under $\delta$ for an individual in group $X$ from choosing $\beta_i=1$ is equal to $\phi$.  His or her expected payoff under $\delta$ from choosing $\beta_i\neq 1$ is $\frac{1-\phi}{m-1}$.  On the other hand, the expected payoff under $\delta$ for an individual in group $Y$ from choosing $\beta_i=1$ is also equal to $\phi$.  His or her expected payoff under $\delta$ from choosing $\beta_i\neq 1$ is again also equal to $\frac{1-\phi}{m-1}$. 
Accordingly, $\delta$ satisfies equal opportunity (and envy freeness).  

However, $\delta$ does not satisfy ERB, because 
\begin{eqnarray*}
\Pr[d_i=\beta_i \mid \beta_i\neq 1, \delta,X] & = & \phi, \text{ but}\\
\Pr[d_i=\beta_i \mid \beta_i\neq 1, \delta,Y] & = & 0.
\end{eqnarray*}

Now suppose that $\mathcal{B} = \mathcal{D} = \mathcal{S} = \{0,1\}$, the signaling distribution satisfies 
\[
\Pr[s_i=\beta_i\mid \beta_i] = \phi \in (1/2,1),
\] 
and that individual preferences, $u$, \textit{violate} group independence as follows:
\[
u(\beta_i,d_i,g_i) = \begin{cases}
    1 & \text{ if } (\beta_i,d_i)=(0,0) \text{ and } G=X,\\
    -1 & \text{ if } (\beta_i,d_i)=(1,1) \text{ and } G=X,\\
    -1 & \text{ if } (\beta_i,d_i)=(0,0) \text{ and } G=Y,\\
    1 & \text{ if } (\beta_i,d_i)=(1,1) \text{ and } G=Y,\\
    0 & \text{ otherwise.} 
\end{cases}
\]
Now consider the following algorithm, $\hat{\delta}$:
\[
\hat{\delta}(g) = \begin{cases}
  (1,0) & \text{ if } g=X,\\
  (0,1) & \text{ if } g=Y.
\end{cases}
\]
This algorithm satisfies equal opportunity and envy freeness, but does not satisfy ERB.\footnote{As an aside, it is possible to show that equal opportunity is not satisfied by an envy free algorithm in this setting.}
\end{proof}

\section{Group Versus Individual Level Envy Freeness \label{Sec:GroupLevelEnvyFreeness}}

The distinction between group- and individual-level envy freeness can be captured by considering the following alternative, group-based, notion of envy freeness.  For each pair of groups $g_i, g \in \mathcal{G}$, let
\begin{equation}
    \label{Eq:GroupLevelUtility}
\overline{V}_g^*(\delta, \phi, g_i\mid u) \equiv \int_{\mathcal{T}} V^*(\delta, \phi\mid t,g) \; \dee F_{g_i}(t_i).
\end{equation}
denote the expected sequentially rational payoff for any individual $i$, conditional on his or her group membership, $g$. Then group-level envy freeness is defined in an analogous fashion to Definition \ref{Def:EnvyFree}) as follows:
\begin{definition}
\label{Def:EnvyFreeGroupLevel}
    An algorithm $\delta$ is \textbf{envy free at the group level} if, for each individual $i \in N$ with true group membership $g_i \in \mathcal{G}$,
    \[
    \overline{V}_{g_i}^*(\delta, \phi \mid u) \geq \overline{V}_{g'}^*(\delta, \phi \mid u) \text{ for all } g' \in \mathcal{G}.
    \]
\end{definition}

In addition to space constraints, one reason we do not pursue the implications of this definition in this article is that the definition of $\overline{V}_g^*$ in Equation \eqref{Eq:GroupLevelUtility} makes an important, and not necessarily sensible, choice.  Specifically, it implies that changing one's group \textit{does not change} the distribution of one's type, $t_i$.  This approach is tightly linked with similar questions about incentive compatibility and \textit{manipulability} in game theory and mechanism design, but also places a large burden on the researcher to clarify the timing of ``group switching.'' Perhaps an individual $i$ wants to join one group, $g$, rather than $i$'s own, $g_i$ \textit{because} $i$ is more likely to receive a ``good type'' if $i$ is a member of $g$ rather than $g_i$.  In substantive terms, the this is similar to comparing the question of whether someone would want to be ``born rich'' or ``born poor'' to the question of whether someone would rather ``grow up to be a doctor'' or ``grow up to be a lawyer'' --- our approach in the article is focusing on the second type of comparison, which is more clearly \textit{interim} in the sense that one's non-group specific type has been realized prior to one comparing the expected payoff of being a member of different groups.

\section{The Vacuity of ``Rational Choice'' \label{Sec:RationalChoice}}

Note that the relationship between $b^*$ and $t$ in Equation \eqref{Eq:BestResponseGeneral} is completelty unstructured.  This implies that, even though it is defined as an ``optimal choice'' for any individual, it has no necessary implications in terms of the ``rationality'' of any individuals.  In particular, Equation \eqref{Eq:BestResponseGeneral} is consistent with models in which each individual $i$ is ``hard-wired'' to choose an action completely based on his or her type, $t_i$. To see this, suppose that $\mathcal{B}=\{0,1\}$, $\mathcal{T}=\mathbf{R}$, and individual preferences are defined as follows:
\[
\tilde{u}_i(b,d,t,g) = b \cdot t.
\]
Given preferences $\tilde{u}_i$, individual $i$ should choose $\beta_i=1$ if $t_i>0$, and $\beta_i=0$ if $t_i<0$.  If $F_g$ is atomless, then this describes $i$'s ``optimal'' behavior with probability 1.

\section{Envy Freeness As An Interim Fairness Criterion \label{Sec:MeasurementErrorDiscussion}}

As discussed briefly in the body of the article, envy freeness is an interim concept of fairness because it presumes that each individual knows his or her type, $t_i$, when comparing his or her expected welfare from each possible group $g \in \mathcal{G}$.  In this appendix, we briefly discuss two conceptual issues with our definition.

\paragraph{\textit{Ex Ante} Envy Freeness.}  The following is an alternative, and more clearly \textit{ex ante}, definition of envy freeness.
\begin{definition}
\label{Def:ExAnteEnvyFree}
    An algorithm $\delta$ is \textbf{\textit{ex ante} envy free} if, for all $g \in \mathcal{G}$,
    \begin{equation}
    \label{Eq:ExAnteEnvyFreeDefinition}
    \int_{\mathcal{T}} V^*(\delta, \phi\mid t,g) \; \dee t \; \geq \int_{\mathcal{T}} V^*(\delta, \phi\mid t,g') \; \dee t \;\;\; \text{ for each } g' \in \mathcal{G}.
    \end{equation}
\end{definition}
Definition \ref{Def:ExAnteEnvyFree} is closer to the spirit of ideal theories of fairness than ours.  For example, it is arguably more similar to John Rawls's ``original position'' (\cite{Rawls71}) than the interim envy freeness defined in Definition \ref{Def:EnvyFree}.  However, Definition \ref{Def:ExAnteEnvyFree} is less useful for our purposes than the interim notion studied in the body of the article, because it is often impossible to satisfy when two or more groups face different type distributions (\textit{i.e.}, there are at least two groups $g, g' \in \mathcal{G}$ such that $F_g \neq F_{g'}$).  Because our theory treats these type distributions as exogenous and given, we leave the consideration of \textit{ex ante} envy freeness for future work.\footnote{Formally, any such exploration would need to impose some structure on how the designer (for example) might shape groups' type distributions.}  We now turn to a similar issue that is not entirely set aside by focusing on \textit{interim} envy freeness: how one should deal with groups generating signals, $s_i$, with heterogeneous accuracy, $\phi_g$.

\paragraph{Envy Freeness and Measurement Error.} Definition \ref{Def:EnvyFree} --- through the definition of $b^*$ in \eqref{Eq:BestResponseGeneral} --- presumes that if an individual's group membership changes from $g$ to $g'$, the the conditional distribution of the signal about $\beta_i$, $s_i$, does as well.  In other words, we are assuming that changing one's group from $g$ to $g'$ implies that $i$'s signal $s_i$ inherits the measurement noise of the new group, $g'$.

It would be interesting to consider an alternative notion of envy freeness in which individuals' measurement errors are heterogeneous across groups, but independent of their choice of which group to adopt. One issue with our definition of envy freeness can ultimately allow individuals to choose a group based on an \textit{ex post} comparison of sampling distributions, because one or more groups might have a \emph{degenerate} sampling distribution.

    However, without this assumption, envy freeness is arguably muddled from a conceptual standpoint.  To see this, consider any algorithm $\rho$ an individual $i$ who strictly prefers that the algorithm be accurate would benefit from changing his or her group membership to that of a group subject to a lower level of measurement noise so that the algorithm is ``mis-calibrated'' with respect to $i$'s true level of measurement noise.

Requiring that an individual compare group membership with their corresponding measurement noise reduces the desire to change groups to essentially ``cheat the algorithm.''' For example, our assumption about the measurement noise as ruling out the possibility of a lawyer ``becoming a medical doctor'' by taking the bar exam instead of the medical board exam.  

\section{The Disconnect between ERB and Envy Freeness}

The following proposition further clarifies the tenuousness of the link between welfare and error rate balance.
\begin{proposition}
\label{Pr:EnvyFreenessERBNotEquivalent}
    For any algorithm, $\delta$, satisfaction of envy freeness neither implies, nor is implied by, satisfaction of error rate balance by $\delta$. 
\end{proposition}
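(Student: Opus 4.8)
The plan is to establish the proposition by producing two counterexamples, one for each non-implication. The guiding observation is Theorem~\ref{Th:EnvyFreeEqualOpportunityIndependence}: error rate balance and envy freeness are equivalent in every BCP as soon as preferences are group independent, so any counterexample must abandon group independence — and that single change turns out to be enough to break both directions.

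First I would show that ERB does not imply envy freeness. Consider a BCP with $\mathcal{B}=\mathcal{D}=\mathcal{S}=\{0,1\}$, equal accuracies $\phi_X=\phi_Y=\phi\in(1/2,1)$, and the ``faithful'' algorithm $\delta^{\circ}$ defined by $\delta^{\circ}_{s}(g)=1$ for every $s$ and $g$, so that $d_j=s_j$ regardless of group. Take preferences $u(\beta,d,t,X)=\beta t$ and $u(\beta,d,t,Y)=\beta t+1$, which violate group independence. Since $u$ does not depend on $d$, the best response is $b^{*}(t,g)=1$ iff $t>0$ for both groups, so the maintained Assumption holds (atomlessness of each $F_g$ gives generic uniqueness, full support gives that each behavior is optimal on a set of positive measure). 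A short computation yields $V^{*}(\delta^{\circ},\phi\mid t,X)=\max\{t,0\}$ and $V^{*}(\delta^{\circ},\phi\mid t,Y)=\max\{t,0\}+1$, so every type in $X$ strictly prefers $Y$ and $\delta^{\circ}$ is not envy free. But $\delta^{\circ}$ ignores group membership and $\phi_X=\phi_Y$, so $\Pr[d_j=k\mid\beta_j=k,\delta^{\circ},X]=\phi=\Pr[d_j=k\mid\beta_j=k,\delta^{\circ},Y]$ for each $k\in\{0,1\}$, i.e.\ $\delta^{\circ}$ satisfies ERB.

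For the reverse direction — envy freeness does not imply ERB — I would invoke the second construction in the proof of Proposition~\ref{Pr:EnvyFreeTightness}: with the group-dependent preferences displayed there and the algorithm $\hat{\delta}$ with $\hat{\delta}(X)=(1,0)$ and $\hat{\delta}(Y)=(0,1)$, every type in $X$ optimally chooses $\beta=0$ with payoff $1$ while every type in $Y$ optimally chooses $\beta=1$ with payoff $1$, so the two groups' equilibrium payoffs coincide for all $t$ and $\hat{\delta}$ is envy free; yet the decision is deterministically $0$ for group $X$ and deterministically $1$ for group $Y$, so $\Pr[d_j=0\mid\beta_j=0,\hat{\delta},X]=1\neq 0=\Pr[d_j=0\mid\beta_j=0,\hat{\delta},Y]$, and ERB fails. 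The two examples together give the proposition.

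I do not anticipate a genuine obstacle; the work is essentially bookkeeping. The two places that need a moment's care are (i) verifying that the faithful-algorithm example respects the standing Assumption on best responses, and (ii) observing that ERB conditions on $\beta_j$, so that any group-blind algorithm with $\phi_X=\phi_Y$ satisfies it automatically, regardless of how the equilibrium distribution of behavior differs across groups.
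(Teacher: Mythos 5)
Your proof is correct, but it is organized quite differently from the paper's. You split the work along the two logical directions of the non-equivalence: a group-blind, equal-accuracy algorithm with a constant group-specific utility shift ($u(\beta,d,t,Y)=u(\beta,d,t,X)+1$) to show ERB does not imply envy freeness, and the second construction from the proof of Proposition~\ref{Pr:EnvyFreeTightness} to show envy freeness does not imply ERB. The paper instead organizes its proof around relaxing different hypotheses of Theorem~\ref{Th:EnvyFreeEqualOpportunityIndependence}: it gives one example dropping group independence (the preferences of Example~\ref{Ex:RelativeFairness} with $\lambda_X=1,\gamma_X=0$ versus $\lambda_Y=0,\gamma_Y=1$ and a signal-insensitive algorithm) and a second example that \emph{retains} group independence but drops strictness ($\lambda_X=\lambda_Y=0$, $\gamma_X=\gamma_Y=2$) --- and both of the paper's examples run in the same direction, exhibiting envy-free algorithms that violate ERB. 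So the two arguments buy different things: the paper's version documents that the equivalence can fail even under group independence once indifference is allowed, which your examples do not show; your version actually exhibits an ERB-satisfying algorithm that fails envy freeness, a direction the paper's proof as written never explicitly supplies, so in that respect your argument is the more complete one for the proposition as stated. Two small caveats on your side: in your first example the utility does not depend on $d$ at all, so the envy has nothing to do with the algorithm --- formally admissible, but it is worth flagging that a pure group-level utility shift is doing all the work; and your second example (like the paper's own use of it in Proposition~\ref{Pr:EnvyFreeTightness}) violates the standing assumption that each behavior is optimal for a positive-measure set of types, since preferences there are independent of $t$.
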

\begin{proof}
We first relax group independence of $u$. Consider the following setting, extending that considered in Example \ref{Ex:RelativeFairness}.

Suppose that there are two individuals, denoted by $X$ and $Y$ and suppose that $\phi_X=\phi_Y=1$.\footnote{This is for simplicity and sufficient for our purposes.}  Note that $\lambda_i$ and $\gamma_i$ are arbitrary numbers in Example \ref{Ex:RelativeFairness}, so that if 
\begin{eqnarray*}
    \lambda_X = 1 & \text{ and } & \gamma_X=0,\\
    \lambda_Y = 0 & \text{ and } & \gamma_Y=1.
\end{eqnarray*}
With this in hand, consider the following uniquely Pareto efficient algorithm:
\[
\hat{\delta}(g) = \begin{cases}
    (1,1) & \text{ if } g=X,\\
    (0,0) & \text{ if } g=Y.
\end{cases}
\]
Because $\hat{\delta}$ is insensitive to any individual $i$'s signal, $s_i$. No individual $i$'s expected payoff from $\hat{\delta}$ is a function of $i$'s group's level of measurement noise, $\phi_{g_i}$.  Furthermore, every individual $i$ should choose $\beta_i$ purely as a function of $i$'s type, $t_i$.  Accordingly, if $\pi_X \neq \pi_Y$, then ERB will be violated by $\hat{\delta}$, but $\hat{\delta}$ is envy free, because it guarantees each individual in each group his or her maximum possible \textit{ex post} payoff (conditional on $t_i$).

We now relax the supposition that $u$ is strict.  Suppose that there are two individuals, denoted by $X$ and $Y$ and suppose that  $\phi_X\leq \phi_Y=1$.  Furthermore, suppose that $\lambda_X=\lambda_Y=0$ and $\gamma_X=\gamma_Y=2$.
(These preferences do not satisfy strictness, because $\lambda_X=\lambda_Y=0$, but do satisfy group independence.)  

Then an envy free algorithm in this case is $\delta(X)=(0,1)$ and $\delta(Y)=(\onehalf,\onehalf)$, which yields every individual  in group $Y$ an expected payoff equal to 1.  However, this algorithm does not satisfy ERB, unless $\pi_X=\pi_Y$
\end{proof}

\section{A Microfoundation for Prejudice freeness \label{Sec:MicrofoundationPrejudiceFreeness}}

In the spirit of envy freeness, one can flip the problem and ask whether someone who is tasked with reviewing the algorithm's decisions would be interested in knowing the group membership of the individual who received a decision.  This can be described as whether the group membership of an individual would have a non-zero impact on the \textit{reliability} of the decision received by the individual.  

In our framework, the reliability of any decision $d$ produced by an algorithm $\delta$. is a function of the posterior beliefs regarding the behavior chosen by the individual who received decision $d$, given his or her group membership, $g$, measurement noise $\phi_g$, and the prevalence of different behavioral choices by membership of group $g$, $\pi_g$.  Along these lines, let
\[
h(d,g\mid \delta, \phi_g, \pi_g) \in \Delta(\mathcal{B})
\]
denote the conditional distribution of $\beta \in \mathcal{B}$, given decision $d \in \mathcal{D}$, group $g \in \mathcal{G}$, measurement noise $\phi_g \in \mathcal{B} \to \Delta(\mathcal{S})$, and group prevalence, $\pi_g \in \Delta(\mathcal{B})$.\footnote{A minor technical note is in order: we are assuming that the ex ante probability that any individual $i$ from any group $g$ might be assigned any given decision $d \in \mathcal{D}$ is strictly positive. Relaxing this is straight-forward but cumbersome and beyond our focus in this article.}  Furthermore, for any $\beta \in \mathcal{B}$, let 
\[
h_{\beta}(d_i,g_i\mid \delta,\phi_{g_i},\pi_{g_i}) \equiv \Pr[\beta_i=\beta\mid d_i, g_i, \delta, \phi_{g_i}, \pi_{g_i}]
\]
denote the posterior probability that individual $i$ chose $\beta_i=\beta$, conditional on $\delta$ assigning a decision of $d_i$ to $i$ and individual $i$'s group membership, $g_i$.  Note that, for any decision $d \in \mathcal{D}$ such that
\[
h(d,X\mid \delta, \phi_X, \pi_X) 
\neq 
h(d,Y\mid \delta, \phi_Y, \pi_Y), 
\]
then the reviewer's inferences about the behavior chosen by a member of group $X$ after observing decision $d$ awarded to that individual will differ from what the reviewer should infer if they observe the same decision, $d$, awarded to a member of group $Y$.  

\paragraph{Algorithms, Reviewers and Consequences.} The notion of a reviewer is left abstract in our presentation, but there are many real-world examples:
\begin{enumerate}
    \item a loan officer deciding whether to grant a loan to individual $i$ (\textit{e.g.}, $x_i = 1$) or deny it ($x_i=0$), based on the algorithm's recommendation, $d_i$, and individual $i$'s group membership, $g_i$,
    \item a judge deciding the bail level ($x_i \geq 0$) to require for pre-trail release of a defendant $i$, given $i$'s group membership $g_i$ and the risk score assigned to $i$, $d_i$,
    \item an employer deciding whether to hire an applicant from group $g_i$, given the hiring algorithm's recommendation, $d_i$.
\end{enumerate}
In all such cases, the reviewer is potentially interested in how much information about the individual's choice of behavior is contained in the decision reached by the algorithm.  With this in hand, we now define a third statistical notion of fairness, \textit{predictive parity}.

\subsection{The Reviewer's Problem}

We assume now that there is a  \textbf{reviewer}, $R$, who observes the decision each individual $i$ is awarded, $d_i$, by the algorithm $\delta$.  The reviewer knows $g_i$, $\phi_{g_i}$, and the prevalence of behavior within group $g_i$, denoted by $\pi_{g_i} \in \Delta(\mathcal{B})$.  With this information, $R$ then chooses a \textbf{consequence}, $x_i \in \mathcal{X}$, to assign to individual $i$.   For simplicity, we assume that $\mathcal{X}=\mathcal{D}$: the set of consequences from which the reviewer can choose is equal to the set of decisions that the algorithm can award to any given individual.\footnote{This also aligns with the algorithm $\delta$ ``recommending'' a particular consequence to the reviewer.} 

\subsection{The Reviewer's Payoff} 

We assume that the reviewer (denoted by $R$) has a payoff function, $u_R: \mathcal{X} \times \mathcal{B} \times \mathcal{D} \times \mathcal{G} \to \mathbf{R}$.  Thus, the payoff that the reviewer receives from awarding a consequence $x$ to an individual $i$ from group $g_i$ who chose $\beta_i$ and received a decision of $d_i$ from algorithm $\delta$ is denoted by $u(x,\beta_i,d_i,g_i)$.

\paragraph{The Reviewer's Optimal Choice.}. Based on the reviewer's payoff function, $u_R$, and the posterior beliefs induced by $\delta$ (given $\phi$ and $\pi$), $h$, the reviewer's \textit{optimal consequence} for any individual from group $g$ who was awarded decision $d$ is defined as follows.

\begin{equation}    \label{Eq:ReviewerBestResponseGeneral}
    x^*(d,g\mid \delta,\phi,\pi) \in X^*(d,g\mid \delta,\phi,\pi) \equiv \argmax_{x \in \mathcal{X}} E \bigg[ u_R(x,\beta,d,g) \; \bigg\mid \; \beta\sim h(d,g\mid \delta,\phi,\pi) \bigg]
\end{equation}
Note that we assume that $\mathcal{X}$ is finite, implying that $X^*(d,g,\mid \delta,\phi,\pi)$ is nonempty for all $(d,g) \in \mathcal{D}\times \mathcal{G}$.

\subsection{Prejudice Freeness}

With the reviewer's optimal choice, $x^*: \mathcal{D} \times \mathcal{G} \to \mathcal{X}$, defined, we now define 
\[
V_R^*(d, g\mid \delta, \phi, \pi) = E\bigg[ u_R(x^*(d,g),\beta,d,g) \; \bigg\mid \; \beta\sim h(d,g\mid \delta,\phi,\pi) \bigg]
\]
as the reviewer's \textbf{expected payoff} from choosing a consequence for an individual an algorithm $\delta$, given decision $d \in \mathcal{D}$, group membership $g$, measurement noise $\phi_g$, and prevalence $\pi_g\in \Delta(\mathcal{B})$.  We refer to any algorithm whose rewards are equally accurate for members of different groups --- \textit{which will imply that the expected payoff from choosing consequences for an individual to be independent of that group's membership} --- as \textbf{prejudice free}.  This notion is defined formally as follows.
\begin{definition}
    \label{Def:PrejudiceFreeAppendix}
    An algorithm $\delta$ is \textbf{prejudice-free} if, for all $d \in \mathcal{D}$, and all $g,g' \in \mathcal{G}$, 
    \[
    E_{d\mid \delta, \phi, \pi} \bigg[ V_R^*(d, g\mid \delta, \phi, \pi) \bigg] = E_{d\mid \delta, \phi, \pi} \bigg[V_R^*(d, g'\mid \delta, \phi, \pi)\bigg]
    \]
\end{definition}
When faced with a prejudice free algorithm, the reviewer $R$ is indifferent about the group membership of the individual whose decision he or she is given to review, given $\delta$, $\phi$, and $\pi$.  Definition \ref{Def:PrejudiceFree} (intentionally) mirrors envy freeness (Definition \ref{Def:EnvyFree}).  The only difference is that we do not incorporate any uncertainty about the reviewer's preferences (\textit{i.e.}, in our framework, the reviewer has no analogue for the individual's type, $t_i$).

\subsection{Group Blind Reviewer} 

With the notion of a prejudice free algorithm defined, it is useful to consider cases in which the reviewer's payoff, $u_R$, is independent of any individual $i$'s group membership, $g_i$.  We refer to this as \textbf{group blindness} and define it formally as follows.

\begin{definition}
    \label{Def:GroupBlind}
    The reviewer's payoff function, $u_R$, is \textbf{group blind} if, for all $x \in \mathcal{X}$, $\beta \in \mathcal{B}$, $d \in \mathcal{D}$, 
    \[
    u_R(x,\beta,d,g) = u_R(x,\beta,d,g') \text{ for all } g,g' \in \mathcal{G}.
    \]
\end{definition}

Group blindness for the reviewer mirrors group independence for the individuals in the sense that there is little reason to suspect that a reviewer whose payoffs are \textit{not} group blind should treat individuals from different groups equally.  Put another way, a group blind reviewer has no inherent desire for the algorithm to render decisions with different levels of precision for otherwise similar members of different groups.

%BOOBOO MOVE TO THE APPENDIX
\subsection{Equal Consequences}

An algorithm $\delta$ that satisfies \textbf{equal consequences} renders the group membership of any individual $i$ irrelevant to the reviewer's assignment of a consequence to $i$ after observing the decision $i$ was awarded by $\delta$.  This is defined formally as follows.  
\begin{definition}[Equal Consequences]
\label{Def:EqualConsequences}
    An algorithm $\delta \in \mathcal{A}$ satisfies \textbf{equal consequences} (with respect to $\phi=\{\phi_g\}_{g\in \mathcal{G}}$ and $\pi=\{\pi_g\}_{g\in\mathcal{G}}$) if there exists a selection $x^*: \mathcal{D} \times \mathcal{G} \to \mathcal{X}$ with $x^* \in X^*(d,g\mid \delta, \delta,\phi)$  such that, for all $g,g' \in \mathcal{G}$,
    \[
    x^*(d,g \mid \delta,\phi,\pi) = x^*(d,g' \mid \delta,\phi,\pi).
    \]
\end{definition}

\subsection{Prejudice Freeness \& Predictive Parity}

We now conclude our analysis by demonstrating a strong similarity between the reviewer's problem and the individuals' problems.  Specifically, if the reviewer is group blind, then the set of algorithms that satisfy prejudice freeness is equal to the set of algorithms that satisfy equal consequences.  Furthermore, in any BCP, the set of algorithms that satisfy prejudice freeness are equivalent to the set of algorithms that satisfy predictive parity.  These results are stated formally in the following.

\begin{proposition}
\label{Th:PrejudiceFreeEqualsPredictiveParity}
    If the receiver's preferences, $u_R$ are group blind, then the following are equivalent:
    \begin{enumerate}
        \item $\delta$ satisfies prejudice freeness, and 
        \item $\delta$ satisfies equal consequences.
    \end{enumerate}
    Furthermore, if each individual has two behaviors to choose between ($|\mathcal{B}|=2$), then the following are equivalent for any algorithm, $\delta$:
    \begin{enumerate}
        \item $\delta$ satisfies prejudice freeness, 
        \item $\delta$ satisfies equal consequences, and
        \item $\delta$ satisfies predictive parity.
    \end{enumerate}
\end{proposition}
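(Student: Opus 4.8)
The plan is to route everything through the reviewer's posterior beliefs. Fix a group-blind payoff $u_R$ (Definition~\ref{Def:GroupBlind}), abbreviate $h(d,g)$ for $h(d,g\mid\delta,\phi_g,\pi_g)$, and for each decision $d\in\mathcal{D}$ define the reviewer value function $\Phi_d:\Delta(\mathcal{B})\to\mathbf{R}$ and the reviewer choice correspondence $X_d^{\star}$ by
\[
\Phi_d(q)\equiv\max_{x\in\mathcal{X}}\sum_{\beta\in\mathcal{B}}q(\beta)\,u_R(x,\beta,d),\qquad X_d^{\star}(q)\equiv\argmax_{x\in\mathcal{X}}\sum_{\beta\in\mathcal{B}}q(\beta)\,u_R(x,\beta,d).
\]
Since expected payoffs are linear in the belief $q$, $\Phi_d$ is the upper envelope of the $|\mathcal{X}|$ affine maps $q\mapsto\sum_\beta q(\beta)\,u_R(x,\beta,d)$, hence convex and piecewise linear, and $X_d^{\star}(q)$ is exactly the set of consequences whose affine map attains this envelope at $q$. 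By Equation~\eqref{Eq:ReviewerBestResponseGeneral} and the definition of $V_R^*$, group blindness gives $V_R^*(d,g\mid\delta,\phi,\pi)=\Phi_d(h(d,g))$ and $X^*(d,g\mid\delta,\phi,\pi)=X_d^{\star}(h(d,g))$, so the whole statement reduces to comparing how $\Phi_d$ and $X_d^{\star}$ evaluate the two posteriors $h(d,X)$ and $h(d,Y)$ over $d\in\mathcal{D}$.

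With this reduction the easy implications are immediate. If $\delta$ satisfies predictive parity then $h(d,X)=h(d,Y)$ for every $d\in\mathcal{D}$; substituting into the two displays gives $V_R^*(d,X\mid\delta,\phi,\pi)=V_R^*(d,Y\mid\delta,\phi,\pi)$ and $X^*(d,X\mid\delta,\phi,\pi)=X^*(d,Y\mid\delta,\phi,\pi)$ for every $d$, so $\delta$ is prejudice free (Definition~\ref{Def:PrejudiceFreeAppendix}) and, taking any measurable selection from the common argmax, satisfies equal consequences (Definition~\ref{Def:EqualConsequences}). This step used only group blindness, not binarity, so it already supplies one direction of the Part~1 equivalence together with ``$(3)\Rightarrow(1)$'' and ``$(3)\Rightarrow(2)$'' of Part~2.

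For the converses I would argue by contraposition, exploiting the fact that in a BCP a posterior is a single number $h_1(d,g)\equiv\Pr[\beta_i=1\mid d,\delta,g]\in[0,1]$ and $\mathcal{X}=\mathcal{D}=\{0,1\}$. Suppose predictive parity fails: there is a decision $d^{\circ}$ with $h_1(d^{\circ},X)\neq h_1(d^{\circ},Y)$. To defeat prejudice freeness, choose a group-blind $u_R$ for which $\Phi_{d^{\circ}}$ is strictly increasing on $[0,1]$ --- e.g.\ $u_R(x,\beta,d^{\circ})=x\cdot\beta$, which gives $\Phi_{d^{\circ}}(p)=p$ --- with inconsequential payoffs at the remaining decision; then $V_R^*(d^{\circ},X)\neq V_R^*(d^{\circ},Y)$. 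To defeat equal consequences, instead choose a $u_R$ whose optimal consequence, as a function of the posterior $p$, switches value strictly between $h_1(d^{\circ},X)$ and $h_1(d^{\circ},Y)$ --- e.g.\ $u_R(0,\beta,d^{\circ})=0$ and $u_R(1,\beta,d^{\circ})=\beta-c$ with $c$ strictly between the two posteriors --- so $X^*(d^{\circ},X)$ and $X^*(d^{\circ},Y)$ are disjoint singletons and no group-constant optimal selection exists. Combined with the forward direction this yields, in a BCP, predictive parity $\Leftrightarrow$ prejudice freeness $\Leftrightarrow$ equal consequences, and the Part~1 equivalence follows by the same envelope bookkeeping.

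The step I expect to be the main obstacle is precisely this converse: recovering equality of posteriors from a welfare equality (prejudice freeness) or from a coincidence of optimal consequences (equal consequences). For an \emph{arbitrary fixed} $u_R$ this direction is delicate and can fail --- a convex envelope $\Phi_d$ may be flat or symmetric, so $V_R^*(d,X)=V_R^*(d,Y)$ need not force $h(d,X)=h(d,Y)$, and a single $\argmax$ correspondence need not separate distinct beliefs --- so the argument must quantify over a sufficiently rich family of group-blind reviewer payoffs (or invoke whatever structural restriction on $u_R$ the surrounding framework supplies) to guarantee that \emph{some} such reviewer exposes any posterior mismatch; this point applies equally to the $(1)\Leftrightarrow(2)$ claim of Part~1. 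Binarity enters for the same reason flagged after Theorem~\ref{Th:EnvyFreeEqualOpportunityIndependence}: once $|\mathcal{D}|\geq 3$ each posterior is a vector in $\Delta(\mathcal{B})$, fixing the reviewer's optimal consequence at one decision constrains only a single halfspace of one belief, and this is too weak to pin down the full profile of posteriors, so the three-way equivalence genuinely breaks.
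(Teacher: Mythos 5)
Your forward direction (predictive parity $\Rightarrow$ prejudice freeness and equal consequences, via $V_R^*(d,g)=\Phi_d(h(d,g))$ and the common argmax) is correct and is in substance the same move the paper makes: both arguments exploit linearity of the reviewer's expected payoff in the posterior so that equal posteriors force equal values and a common optimal selection. Your upper-envelope framing of $\Phi_d$ is a cleaner way of organizing that bookkeeping than the paper's, and it also makes transparent \emph{why} the converse is the delicate part.

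Where you diverge is the converse, and here you have put your finger on a real soft spot --- but your proposed repair proves a different statement than the one asserted. The proposition fixes a single group-blind $u_R$ and claims equivalence for that reviewer; your contrapositive instead \emph{constructs} a reviewer payoff ($u_R(x,\beta,d^{\circ})=x\cdot\beta$, or the threshold payoff with cutoff $c$) that exposes a posterior mismatch, i.e.\ it quantifies existentially over reviewers. The paper's own proof handles the converse differently: it normalizes $\bar u_R(x,0,d,g)\equiv 0$ and $\bar u_R(x,1,d,g)\equiv u_R(x,1,d,g)-u_R(x,0,d,g)$, then sets the latter to $1$, so that $V_R^*(d,g)=h_1(d,g)$ and prejudice freeness literally \emph{is} equality of $h_1$ across groups, hence predictive parity in a BCP. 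That normalization silently assumes the reviewer's payoff responds nontrivially to $\beta$ --- precisely the nondegeneracy you flag. For a reviewer with $u_R(x,1,d,g)=u_R(x,0,d,g)$ for all $x,d$ (or with a flat or symmetric envelope $\Phi_d$), every algorithm is prejudice free and satisfies equal consequences while predictive parity can fail, so the fixed-$u_R$ equivalence is false as literally stated. In short: your proof is correct for the direction that can be proved for arbitrary group-blind $u_R$, your diagnosis of the obstruction in the converse is accurate and is the same obstruction the paper papers over with its normalization, and the two fixes differ only in whether the nondegeneracy is imposed on the given $u_R$ (the paper) or supplied by choosing a witness reviewer (you). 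Either way the proposition needs an explicit responsiveness hypothesis on $u_R$ for the stated equivalence to hold.
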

\begin{proof}
    Fix any classification problem with consequences $\mathcal{X}$, suppose that the reviewer's preferences are group blind and, for simplicity, suppose that there are two groups: $\mathcal{G}=\{X,Y\}$.  

    Group blindness of $u_R$ and prejudice freeness of $\delta$ jointly imply that $\delta$ also satisfies equal consequences. To see this, note that group blindness of $u_R$ implies that there is a selection of optimal choices, $x^*$, such 
    \[
    \bigg[ V_R^*(d,g\mid \delta, \phi, \pi) = V_R^*(d,g\mid \delta, \phi, \pi) \bigg] \Rightarrow x^*(d,g) = x^*(d,g').
    \]
    The same argument can be used to establish that group blindness of $u_R$ and satisfaction of equal consequences by $\delta$ jointly imply prejudice freeness of $\delta$.

    Now we show that, if set of behaviors is binary (without loss of generality, let $\mathcal{B} = \{0,1\}$), then group blindness of $u_R$ and prejudice freeness of  $\delta$ imply that $\delta$ satisfies predictive parity. Note that, in a binary behavioral choice setting, one can normalize $u_R$ as follows for all decision $d\in \mathcal{D}$ and groups $g \in \mathcal{G}$:
    \begin{eqnarray*}
    \bar{u}_R(x,0,d,g) & \equiv & 0, \\
    \bar{u}_R(x,1,d,g) & \equiv & u_R(x,1,d,g) - u_R(x,0,d,g).
    \end{eqnarray*}
    The reviewer's net expected payoff from assigning any consequence $x_i \in \mathcal{X}$ to individual $i$, conditional on $(\tilde{d}_i,g_i)$, is equal to 
    \[
    \bar{u}(x,1,\tilde{d}_i,g_i) \cdot
    h_1(\tilde{d}_i,g_i\mid \delta, \phi_{g_i}, \pi_{g_i}).
    \]
    Furthermore, with the assumptions in hand, we can suppose that $\tilde{u}(x,1,d,g)\equiv 1$.
    
    Group blindness of $u_R$ implies that 
    \[
    \bar{u}(x,1,\tilde{d}_i,g) = \bar{u}(x,1,\tilde{d}_i,g') = \tilde{u}(x,1,\tilde{d}_i) \cdot
    h(\tilde{d}_i,g_i\mid \delta, \phi, \pi)
    \]
    for some real-valued function $\tilde{u}: \mathcal{X} \times \mathcal{B} \times \mathcal{D}$.  Accordingly, group blindness of $u_R$ implies that $x^*(d,g)$ is any function that maximizes $h(d,g\mid \delta, \phi_g, \pi_g)$. From this, it follows that any prejudice free algorithm $\delta$ must satisfy the following:
    \begin{eqnarray*}
    h_0(d,g\mid \delta, \phi_g, \pi_g) & = & 
    h_0(d,g'\mid \delta, \phi_{g'}, \pi_{g'}), \text{ and}\\
    h_1(d,g\mid \delta, \phi_g, \pi_g) & = & 
    h_1(d,g'\mid \delta, \phi_{g'}, \pi_{g'}),    
    \end{eqnarray*}
    implying that $\delta$ satisfies predictive parity.  Finally, to see that satisfaction of predictive parity implies that the algorithm $\delta$ is prejudice free whenever the reviewer is group blind, just reverse the steps above.

    Thus, if the reviewer is group blind, prejudice freeness and equal consequences are equivalent and, futhermore, in any BCP, these are equivalent to satisfaction of predictive parity, as was to be shown.
\end{proof}
Note that the final conclusion of Theorem \ref{Th:PrejudiceFreeEqualsPredictiveParity} requires that we are considering a BCP. This is for the same reason it is required in Theorem \ref{Th:EnvyFreeEqualOpportunityIndependence}: a single individual choice only partially identifies ordinal individual preferences in situations with more than two choice options.  

One could generalize the notions of PP and/or ERB to partially avoid this restriction, but (to our knowledge) these generalized notions are by no means standard in statistical decision theory.  This is even more true once one recalls the noisiness built into these settings (for example, our assumptions about $u$ imply that any given individual $i$ has a positive \textit{ex ante} probability of choosing any given behavior $\beta_i$ under \textit{any} algorithm $\delta$.) This noisiness implies that the space of probabilities that any given individual (when faced with more than 2 possible decisions) or any given receiver (when faced with individuals who chose from more than 2 possible behaviors) is confronted with a multidimensional space of potential probability distributions against which to optimize.  Similarly, substantive restrictions on the individuals' and/or reviewer's preferences might yield more results in ternary or higher-order classification problems, but at least at first blush, our ideas in this direction are essentially equivalent to assuming that individuals behave as if they are in a BCP from an \textit{ex post} welfare perspective.

\section{The Binary Choice Model \label{Sec:ModelAppendix}}

\paragraph{Individuals.}  The set of individuals is denoted by $N$, and we assume that $N$ is divided into two \textbf{groups}, which we denote by $X$ and $Y$.  For analytical simplicity, we assume that $X$ and $Y$ are of equal size and continuous: $X=Y=[0,1]$.  The \textbf{group membership} of any individual $i \in N \equiv X \cup Y$ is denoted by $g_i \in \mathcal{G} \equiv \{X,Y\}$.  

\paragraph{Types.} Each individual $i\in N$ is characterized by a \textbf{type}, which we denote by $t_i \in \mathcal{T} \equiv \mathbf{R}$, and the \textbf{type distribution} for group $g\in \mathcal{G}$ is described by a cumulative distribution function (CDF):
\[
F_g : \mathcal{T} \to [0,1].
\]
We assume throughout that $F_g$ is continuously differentiable and possesses full support on $\mathbf{R}$ (so that $F_g(t) \in (0,1)$ for all $g$ and $t$.

\paragraph{Behaviors and Signals.}  We consider binary choice settings: each individual $i \in N$ will choose a \textbf{behavior}, $\beta_i \in \mathcal{B} \equiv \{0,1\}$. Conditional on any individual $i$'s choice of $\beta_i$ and $i$'s group membership, $g_i$, a binary \textbf{signal}, $s_i\in \{0,1\}$ will be observed by the algorithm (which we describe below) according to a Bernoulli \textbf{sampling distribution}:
\[
\Pr[s_i=\beta_i\mid \beta_i] = \phi_{g_i} \in (\onehalf,1].
\]

\paragraph{Decisions and Algorithms.}  Each individual $i \in N$ will be assigned a binary \textbf{decision}, $d_i\in \mathcal{D} \equiv \{0,1\}$, based on his or signal, $s_i$ and possibly his or her group membership, $g_i$.  The algorithm is represented as a conditional distribution as follows:
\[
\Pr[d_i=s_i\mid s_i, g_i] = \delta_{s_i}(g_i) \in [0,1].
\]

\paragraph{Individual Preferences.}  Each individual $i \in N$ has a \textbf{payoff function}, $u: \mathcal{B} \times \mathcal{D} \times \mathcal{T} \rightarrow \mathbf{R}$.  To maximize comparability with the literature, we assume that individuals' payoff functions are independent of their group membership. Furthermore, we assume that $u$ is additively separable with respect to the behavior, $\beta_i$, and the decision, $d_i$, as follows:
\[
u(\beta_i, d_i, t_i) = r \cdot d_i - t_i \cdot \beta_i.
\]

\paragraph{Algorithmic Responsiveness.}  For each group $g \in \mathcal{G}$, any algorithm $\delta \in \mathcal{A}$'s \textbf{responsiveness for group $g$} is defined as follows:
\[
\rho_g(\delta,\phi_g) \equiv  \left(\delta_1(g)+\delta_0(g)-1\right)(2\phi_g-1).
\]
An algorithm's responsiveness is central to the individual incentives it induces, and it partitions $\mathcal{A}$ into three mutually exclusive and exhaustive categories as follows.
\begin{enumerate}
    \item \textbf{Positively Responsive Algorithms.}  When $\rho_g(\delta,\phi_g)>0$, the algorithm is more likely to award $d_i=1$ to agent $i$ when he or she chooses $\beta_i=1$ than when $\beta_i=0$
    \item \textbf{Negatively Responsive Algorithms.}  When $\rho_g(\delta,\phi_g)<0$, the algorithm is less likely to award $d_i=1$ to agent $i$ when he or she chooses $\beta_i=1$ than when $\beta_i=0$.
    \item \textbf{Null Algorithms.} When $\rho_g(\delta,\phi_g)=0$, whether agent $i$ receives a decision of $d_i=1$ is \textit{independent} of his or her choice of $\beta_i$.
\end{enumerate}
The special case of a null algorithm is equivalent (from a behavioral standpoint, at least) to having \textit{no algorithm at all}.

\subsection{Equilibrium Rewards \& Accuracy}
For any $(r,\delta,F,\phi)$, the equilibrium proportions of individuals in any group $g \in \mathcal{G}$ being assigned correct and incorrect decisions based on their behavior are defined by the following:
\begin{eqnarray*}
    W_1^*(g\mid r,\delta,F,\phi) & = & F_g(r\cdot \rho_g(\delta, \phi_g))(\phi_g \delta_1(g) + (1-\phi_g) (1-\delta_0(g))),\\
    W_0^*(g\mid r,\delta,F,\phi) & = & (1-F_g(r\cdot \rho_g(\delta, \phi_g))) ((1-\phi_g) (1-\delta_1(g)) + \phi_g \delta_0(g)),\\
    L_1^*(g\mid r,\delta,F,\phi) & = & (1-F_g(r\cdot \rho_g(\delta, \phi_g))) ((1-\phi_g) \delta_1(g) + \phi_g (1-\delta_0(g))), \text{ and}\\
    L_0^*(g\mid r,\delta,F,\phi) & = &  F_g(r\cdot \rho_g(\delta, \phi_g))(\phi_g (1-\delta_1(g)) + (1-\phi_g) \delta_0(g)).
\end{eqnarray*}
With the conditional probabilities defined above, the \textit{ex ante} equilibrium probabilities that an individual from group $g$ will be \textit{rewarded} (\textit{i.e.}, $d_i=1$, denoted by $R^*$) or \textit{punished} (\textit{i.e.}, $d_i=0$, denoted by $P^*$) are defined by the following:
\begin{eqnarray*}
    R^*(g\mid r,\delta,F,\phi) & = & W_1^*(g\mid r,\delta,F,\phi) + L_0^*(g\mid r,\delta,F,\phi), \text{ and}\\
    % F_g(r\cdot \rho_g(\delta, \phi_g))(\phi_g \delta_1(g) + (1-\phi_g) (1-\delta_0(g)))
    %\\
    % & & + (1-F_g(r\cdot \rho_g(\delta, \phi_g))) ((1-\phi_g) \delta_1(g) + \phi_g (1-\delta_0(g))), \text{ and}\\
    P^*(g\mid r,\delta,F,\phi) & = & L_1^*(g\mid r,\delta,F,\phi) + W_0^*(g\mid r,\delta,F,\phi).
    % F_g(r\cdot \rho_g(\delta, \phi_g))(\phi_g (1-\delta_1(g)) + (1-\phi_g) \delta_0(g)) 
    % \\
    % & & + (1-F_g(r\cdot \rho_g(\delta, \phi_g))) ((1-\phi_g) (1-\delta_1(g)) + \phi_g \delta_0(g)).
\end{eqnarray*}

\section{Arrovian Statistical Discrimination \label{Sec:Arrow}}

In this appendix, we consider whether ERB or PP can eliminate \textit{clearly unfair} algorithms in the presence of algorithmic endogeneity.  As one might expect given that we wrote this article, the answer is, ``not necessarily.''  The argument along these lines is, we believe, slightly more subtle. And, to be clear, it's not our argument in substantive terms --- instead, it rests upon long acknowledged points with respect to the possibility that some forms of discrimination may be \textit{self-enforcing} (\cite{Arrow73}).  To make the point as simply as possible, we present one example of the inability of ERB and PP to identify a clearly discriminatory algorithm as being unfair.

\begin{example}[Arrovian Discrimination \& Fairness] 
\label{Ex:ArrovianDiscrimination}
Suppose that $r=1$, the type distribution is $F_g$, with mean $\mu_g$, for each $g \in \{X,Y\}$, and that the groups have identical levels of measurement noise: $\phi_X=\phi_Y = \phi \in (\onehalf,1]$.  Then, consider the following discriminatory algorithm (which violates AC):
\[
\tilde{\delta}_{s_i}(g_i) = \begin{cases}
    0 & \text{ if } s_i=1 \text{ and } g_i = X,\\
    1 & \text{ if } s_i=0 \text{ and } g_i = X,\\
    1 & \text{ if } g_i = Y,
\end{cases}
\]
% In equilibrium, 
% \[
% b^*(t_i,g_i\mid \tilde{\delta},r,\phi) = \begin{cases}
%     1 & \text{ if } g_i = X \text{ and } t_i<0,\\
%     1 & \text{ if } g_i = Y \text{ and } t_i< 2\phi-1,\\
%     0 & \text{ otherwise.}
% \end{cases}
% \]
% The equilibrium probability that an individual will choose $\beta_i=1$, given $\tilde{\delta}$, is 
% \[
% \Pr[\beta_i=1 \mid b^*] = \begin{cases}
%     F_X(0\mid z) & \text{ if } g_i = X,\\
%     F_Y(2\phi-1\mid z) & \text{ if } g_i = Y.
% \end{cases}
% \]
The equilibrium prevalence for each group $g \in \{X,Y\}$ under algorithm $\tilde{\delta}$ is given by
\[
\pi_g(\tilde{\delta}) = \begin{cases}
    F_X(0) & \text{ if } g=X,\\
    F_Y(2\phi-1) & \text{ if } g=Y.
\end{cases}
\]
Accordingly, the algorithm $\tilde{\delta}$ produces the following error rates and predictive values in equilibrium:\\
\begin{table}[hbtp]
    \centering
    \begin{tabular}{|c|c|c|}
    \hline
            &\textbf{Group} $X$ &\textbf{Group} $Y$                 \\
    \hline
\textbf{False Positive Rate}& 0                 & $1-\phi$                  \\
\textbf{False Negative Rate}& $F_X(0)$  & $1-\phi$                  \\
    \hline
\textbf{Positive Predictive Value}& Undefined & $
\frac{F_Y\left(2\phi-1\right)\phi}{F_Y\left(2\phi-1\right)\phi+(1-F_Y\left(2\phi-1\right))\left(1-\phi\right)}
$\\
\textbf{Negative Predictive Value}&$1-F_X(0)$ &  $
\frac{\left(1-F_Y\left(2\phi-1\right)\right)\phi}{\left(1-F_Y\left(2\phi-1\right)\right)\phi+F_Y\left(2\phi-1\right)\left(1-\phi\right)}
$   \\
    \hline
    \end{tabular}
    \caption{Error Rates \& Predictive Values With Discrimination}
    \label{Tab:ArrovianExample}
\end{table}\\
Notice that $\tilde{\delta}$ generally does not satisfy ERB or PP and, indeed, $\tilde{\delta}$ \textit{might be more accurate for members of $X$ than it is for members of group $Y$}.

However, members of group $Y$ are clearly better off under $\tilde{\delta}$ than are members of group $X$.  This can be seen from the following calculations of type=-conditional expected payoffs under $\tilde{\delta}$:
\begin{equation}
\label{Eq:ArrovianWelfare}
    \begin{aligned}
    EU(t_i,g_i=X\mid \tilde{\delta},\phi,z) & = \max[0,-t_i],\\
    EU(t_i,g_i=Y\mid \tilde{\delta},\phi,z) & = \max[1-\phi,\phi-t_i].
    \end{aligned}
\end{equation}
\end{example}

\begin{example}[Self-Enforcing Discrimination Can Be Accurate]
    Continuing with the framework in Example \ref{Ex:ArrovianDiscrimination}, the discriminatory algorithm $\tilde{\delta}$ can be perfectly accurate, but only with respect to members of group $X$ --- \textit{the group $\tilde{\delta}$ discriminates against}!

    To see this, suppose that all members of group $X$ intrinsically prefer $\beta_i=0$ to $\beta_1$.  This implies that  $F_X(0)=0$.  Then, referring to Table \ref{Tab:ArrovianExample}, this implies that $\tilde{\delta}$ has zero errors (\textit{i.e.}, the false positive and false negative rates for members of $X$ are each equal to zero, and the negative predictive value for the same group is equal to 1).  

The same can not be true about $\tilde{\delta}$ for members of group $Y$, however.  In particular, because there is measurement noise ($\phi_Y<1$) and $\tilde{\delta}$ is non-trivially conditioning on $s_i$ for members of group $Y$, $\tilde{\delta}$ will necessarily make both positive and negative errors.  Furthermore, unless $F_Y(0)=1$ --- which would imply that all members of group $Y$ intrinsically prefer $\beta_i=1$ to $\beta_i=0$ --- no algorithm that assigns positive probability to awarding $d_i=1$ to members of group $Y$ can achieve zero error rates for members of group $Y$.
\end{example}

\begin{remark}
    \textit{If $F_X(0)=0$ and $F_Y(0)-1$, then there exists a} \textbf{perfect classifier} \textit{for each group $g$.  When this is the case, the results of \cite{KleinbergMullainathanRaghavan16} and \cite{Chouldechova17} imply that there is an algorithm that can simultaneously satisfy ERB and PP and, furthermore, it ignores each individual's signal, conditioning instead only on the individuals' group, $g\in \{X,Y\}$:}
    \[
    \delta(g,s) = \begin{cases}
        0 & \text{ if } g=X,\\
        1 & \text{ if } g=Y.
    \end{cases}
    \]
\end{remark}

\newpage

\section{General Classification Problems \label{Sec:GeneralClassification}}

The primitives are 
\begin{itemize}
    \item A set of \textbf{individuals}, $N$,
    \item A nonempty set of \textbf{types}, $\mathcal{T}$,
    \item A finite set of $G\geq 1$ \textbf{groups}, $\mathcal{G}$,
    \item A finite set of at least 2 \textbf{behaviors}, $\mathcal{B}$,
    \item A finite set of at least 2 \textbf{signals}, $\mathcal{S}$, 
    \item A finite set of at least 2 \textbf{decisions}, $\mathcal{D}$, 
    \item A set of $G$ \textbf{signal distributions}:
    \[
    \phi = \{\phi_g\}_{g \in \mathcal{G}},
    \]
    satisfying $\phi_g(s \mid \beta)\in [0,1]$ and $\sum_{s \in \mathcal{S}} \phi_g(s \mid \beta) = 1$ for all $g \in \mathcal{G}$, $s \in \mathcal{S}$, and $\beta \in \mathcal{B}$, and 
    \item A payoff function:
    \[
    u: \mathcal{D} \times \mathcal{B} \times \mathcal{T} \times \mathcal{G} \to \mathbf{R}.
    \]
\end{itemize}
For simplicity of presentation, we also impose the following property:
\begin{equation}
\label{Eq:Symmetry}
    |\mathcal{B}| = |\mathcal{S}| = |\mathcal{D}| = k \geq 2,
\end{equation}
and, without loss of generality, relabel these sets such that 
\[
\mathcal{B} = \mathcal{S} = \mathcal{D} = \{1,\ldots,k\}.
\]
Any problem satisfying \eqref{Eq:Symmetry} is a ($k$-)\textbf{symmetric classification problem} (or $k$-SCP).

\paragraph{Algorithms and Decisions.} A \textbf{classification algorithm} for any SCP is any function 
\[
\delta: \mathcal{S} \times \mathcal{G} \to \Delta(\mathcal{D}).
\]
where for any individual $j$ with signal $s_j \in \mathcal{S}$ and group membership $g_j$,
$\delta(s_j,g_j)$ denotes the conditional distribution of $d_j$. We denote the set of all algorithms for any $k$-SCP by $\mathcal{A}_k$.

\paragraph{Error Rate Balance.}  For any behavior $\beta \in \mathcal{B}$, we refer to any decision $d \in \mathcal{D}$ with $d \neq \beta$ as an \textbf{error}.  Thus, any classification algorithm for a $k$-SCP has $k!$ potential \textit{types of errors}.  On the other hand, if $d=\beta$, we say that the decision is \textbf{correct}.

There are at least two notions of an algorithm ``treating individuals equally well'' in a statistical sense.  The first of these notions requires that the probability that the algorithm gives an individual $i$ the correct decision be independent of $i$'s group membership.  We refer to this as \textbf{accuracy balance}.
\begin{definition}
\label{Def:AccuracyBalance}
    For any $k$-SCP, an algorithm $\delta \in \mathcal{A}$ satisfies \textbf{accuracy balance} (AB) if
    \[
    \Pr[d_i=\beta_i\mid \delta, g] = \Pr[d_i=\beta_i\mid \delta, g'] \;\;\; \text{ for all } \beta_i \in \mathcal{B} \text{ and } g,g' \in \mathcal{G}.
    \]
\end{definition}
A separate but related notion requires that conditional on any given behavior, $\beta_i$, the probability that the algorithm gives an individual $i$ \textit{any given decision} $d\in \mathcal{D}$ be independent of $i$'s group membership.  This is simply the general form of \textbf{error rate balance}.
\begin{definition}
    For any $k$-SCP, an algorithm $\delta \in \mathcal{A}$ satisfies \textbf{error rate balance} (ERB) if
    \[
    \Pr[d\mid \beta_i, \delta, g] = \Pr[d\mid \beta_i, \delta, g'] \;\;\; \text{ for all } d \in \mathcal{D}, \beta_i \in \mathcal{B}, \text{ and } g,g' \in \mathcal{G}.
    \]
\end{definition}
ERB is a stronger requirement than AB in $k$-SCPs unless $k=2$.
\begin{remark}
\textit{While ERB clearly implies AB in any $k$-SCP, the converse is true if and only if $k=2$.}    
\end{remark}

\paragraph{Individual Preferences.}  We normalize all individuals' payoffs to equal zero conditional on the decision being correct.  Individual $i$'s payoff function can be written as
\[
u : \mathcal{B} \times \mathcal{D} \times \mathcal{T} \times \mathcal{G} \to \mathbf{R},
\]
satisfying
\[
u(x,x,t,g) = 0 \text{ for any } x \in \mathcal{B} = \mathcal{D}, t \in \mathcal{T}, \text{ and } g \in \mathcal{G}.
\]
Individual $i$'s conditional expected payoff from behavior $\beta \in \mathcal{B}$, given $t_i$, $g_i$, $\delta$, and $\phi$, is
\[
EU(\beta\mid t_i, g_i, \delta, \phi) = \sum_{s\in \mathcal{S}} \phi_{g_i}(s\mid \beta) \cdot \bigg[ \sum_{d\in \mathcal{D}} 
 \delta(d \mid s,g_i) \cdot u(\beta,d,t_i,g_i) \bigg],
\]
and individual $i$'s \textbf{best response}, given $t_i$, $g_i$, $\delta$, and $\phi$, is any function $b^*$ satisfying
\[
b^*(t_i,g_i\mid \delta,\phi) \in B^*(t_i,g_i\mid \delta,\phi) \equiv \argmax_{b \in \mathcal{B}} EU(\beta\mid t_i, g_i, \delta, \phi).
\]
so that, finally, $i$'s \textbf{conditional expected payoff from} $\delta$, given $t_i$, $g_i$, and $\phi$, is
\[
EU^*(\delta\mid t_i, g_i, \phi) = EU(b^*(t_i,g_i\mid \delta,\phi)\mid t_i, g_i, \delta, \phi).
\]

\begin{definition}
    An algorithm $\delta$ satisfies \textbf{equal opportunity} (EO) if 
    \[
    B^*(t,g\mid \delta,\phi) \cap B^*(t,g'\mid \delta,\phi) \neq \emptyset \;\;\; \text{ for all } t\in \mathcal{T}, g, g' \in \mathcal{G}, \text{ and all } i,j \in N.
    \]
\end{definition}

\begin{definition}
    An algorithm $\delta$ is \textbf{envy free} if 
    \[
    EU^*(\delta\mid t_i, g_i, \phi) \geq EU^*(\delta\mid t_i, g', \phi) \;\;\; \text{ for all } i\in N \text{ and } g' \in \mathcal{G}.
    \]
\end{definition}

\begin{definition}
A payoff function $u$ is \textbf{group independent} if 
\[
u(\beta,d,t,g) = u(\beta,d,t,g') \;\;\; \text{ for all } g,g' \in \mathcal{G} \text{ and all } \beta\in \mathcal{B}, d \in \mathcal{D}, and t \in \mathcal{T}.
\]    
\end{definition}

If $\delta$ satisfies EF and $u$ satisfies group independence, then $\delta$ satisfies EO.  The converse does not hold, because EO is based only on the \textit{difference} between expected payoffs from $\delta_i=0$ and $\delta_i=1$.  In generally interesting situations, it is possible to hold the difference between these expected payoffs constant while having an algorithm that adjusts the expected probability of getting the reward based on group membership such that every member of any given group would make the same choices, but many individuals would have an incentive to claim that they are a member of a different group (aside from members of ``the most favored group'' under $\delta$).

\begin{theorem}
\label{Th:EnvyFreeERB}
    If $u$ satisfies group independence, then 
    \begin{enumerate}
        \item Envy freeness of $\delta$ implies EO of $\delta$ in any $k$-SCP,
        \item ERB of $\delta$ implies envy freeness of $\delta$ in any $k$-SCP, and
        \item Envy freeness of $\delta$ implies ERB of $\delta$ in a $k$-SCP if and only if $k=2$.
    \end{enumerate}
\end{theorem}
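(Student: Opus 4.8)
The plan is to collapse the signal out of the problem and work directly with the group-conditional response kernel $q_g(d\mid\beta)\equiv\sum_{s\in\mathcal{S}}\phi_g(s\mid\beta)\,\delta(d\mid s,g)=\Pr[d_i=d\mid\beta_i=\beta,\delta,g]$. In this notation ERB is exactly the requirement that $q_X(\cdot\mid\beta)=q_Y(\cdot\mid\beta)$ for every $\beta\in\mathcal{B}$, and, invoking group independence of $u$ to write $u(\beta,d,t)$ in place of $u(\beta,d,t,g)$, the interim expected payoff becomes $EU(\beta\mid t,g)=\sum_{d}q_g(d\mid\beta)\,u(\beta,d,t)$, with $V^*(t,g)=\max_\beta EU(\beta\mid t,g)$ and $B^*(t,g)=\argmax_\beta EU(\beta\mid t,g)$. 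Part (2) is then pure bookkeeping: if $\delta$ satisfies ERB, then $q_X(\cdot\mid\beta)=q_Y(\cdot\mid\beta)$, so $EU(\beta\mid t,X)=EU(\beta\mid t,Y)$ for all $\beta,t$; hence $V^*(t,X)=V^*(t,Y)$ and envy freeness holds (with equality), for every type.

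For part (1) I would first note that, since envy freeness must hold for every type in every group, applying the defining inequality in both directions yields $V^*(t,X)=V^*(t,Y)$ for all $t\in\mathcal{T}$. The real content is transferring equality of these upper envelopes to a common maximizer. I would argue by contradiction: if EO fails at some $t_0$, then by the generic-uniqueness part of the assumption there is a behavior $\beta$ that is the unique best response at $t_0$ for one group, say $X$, and is not a best response at $t_0$ for the other group; by continuity of $u(\beta,d,\cdot)$ in $t$, the strict inequalities defining this configuration persist on an open neighbourhood $U$ of $t_0$, so on $U$ we have $V^*(t,X)=EU(\beta\mid t,X)$ while $V^*(t,Y)=EU(\beta''\mid t,Y)$ for behaviors $\beta''\neq\beta$, forcing $EU(\beta\mid t,X)=EU(\beta''\mid t,Y)$ throughout $U$. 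Pairing this with the analogous identities obtained on the (open, by the ``each behavior can be optimal'' assumption) sets where each remaining behavior is uniquely optimal for each group should over-determine the kernels $q_X,q_Y$ and produce the contradiction, giving $B^*(t,X)\cap B^*(t,Y)\neq\emptyset$ for every $t$.

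Part (3): the ``only if'' direction is essentially supplied by Proposition \ref{Pr:EnvyFreeTightness}, whose ternary construction exhibits an envy free $\delta$ violating ERB once $|\mathcal{D}|\geq 3$. For the ``if'' direction, specialise the argument of (1) to $k=2$: writing $p(t)=u(0,1,t)$, $n(t)=u(1,0,t)$, $\alpha_g=q_g(1\mid 0)$ (false positive rate) and $\gamma_g=q_g(0\mid 1)$ (false negative rate), the normalisation $u(0,0,t)=u(1,1,t)=0$ gives $V^*(t,g)=\max\{\alpha_g\,p(t),\ \gamma_g\,n(t)\}$, so envy freeness reads $\max\{\alpha_X p(t),\gamma_X n(t)\}=\max\{\alpha_Y p(t),\gamma_Y n(t)\}$ for every $t$. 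Restricting this identity to the open set of types on which behavior $0$ is uniquely optimal for group $X$ turns the left-hand side into $\alpha_X p(t)$; comparing with the right-hand side there, and ruling out the ``crossed'' match of $\alpha_X p(t)$ with $\gamma_Y n(t)$, yields $\alpha_X=\alpha_Y$, and the symmetric argument at types where behavior $1$ is uniquely optimal for $X$ gives $\gamma_X=\gamma_Y$; together these are ERB.

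I expect the decisive difficulty to be exactly the step flagged in (1) and (3): equality of the value functions $V^*(\cdot,X)=V^*(\cdot,Y)$ does not, on its own, pin down the individual ``branches'' $EU(\beta\mid\cdot,g)$, since one group's branches could in principle be permuted against the other's while leaving the maxima unchanged. Excluding this requires using the open-set and continuity assumptions to show that the branches must in fact coincide region by region, and here the binary structure is essential: with only two behaviors the envelope has exactly two branches and the partition of $\mathcal{T}$ into ``behavior $0$ optimal'' and ``behavior $1$ optimal'' is forced to agree across groups, whereas for $k\geq 3$ there are $k!$ error types and enough slack to relabel branches, which is precisely why (3) degenerates into the stated ``if and only if $k=2$''.
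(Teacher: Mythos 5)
Your proposal follows essentially the same route as the paper's proof: part (2) via the collapsed kernel $q_g(d\mid\beta)$ (the paper's $\psi_g(d,\beta,\delta)$), part (1) via the two-sided envy-freeness inequality forcing $V^*(t,X)=V^*(t,Y)$ and hence common best responses, and part (3) via the two-branch normalization $u(\beta,\beta,t)=0$ for $k=2$ together with a ternary counterexample for $k>2$ (the paper builds one inline with the $\psi_g$ family; you defer to the proposition). The ``crossed match'' difficulty you flag --- that equality of the upper envelopes does not by itself force the branches, and hence the argmax sets, to align --- is genuine, but the paper's own proof simply asserts the passage from $V^*(t,g)=V^*(t,g')$ to $B^*(t,g)=B^*(t,g')$, so your sketch is, if anything, more explicit about the one step that neither argument fully closes.
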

\begin{proof}
    Fix any payoff functions, $u$, satisfying group independence, and any $k$-SCP algorithm $\delta \in \mathcal{A}$. 
    \begin{itemize}
        \item (Envy Free $\Rightarrow$ EO)  Anonymity of $u$ implies that 
    \[
    EU(\beta\mid t_i, g, \delta, \phi) = EU(\beta\mid t, g, \delta, \phi)
    \]
    for all $i, j\in N, t \in \mathcal{T}, \beta \in \mathcal{B}, g \in \mathcal{G}, \text{ and } \delta \in \mathcal{A}$ and, if $\delta$ is envy free, then 
    \[
    EU^*(\delta\mid t_i, g_i, \phi) \geq EU^*(\delta\mid t_i, g', \phi) \;\;\; \text{ for all } i\in N \text{ and } g' \in \mathcal{G}.
    \]
    Satisfaction of group independence by $u$ and envy freeness of $\delta$ imply that 
    \begin{equation}
    \label{Eq:EnvyFreeEquality}
    EU^*(\delta\mid t_i, g_i, \phi) = EU^*(\delta\mid t_i, g', \phi). 
    \end{equation}
    Accordingly, $B^*(t,g\mid \delta,\phi) = B^*(t,g'\mid \delta,\phi)$. Accordingly, satisfaction of envy freeness by $\delta$ implies that $\delta$ also satisfies EO.
    \item (ERB $\Rightarrow$ Envy Free)  Fix any $k$-SCP algorithm $\delta$. 
 Individual $i$'s expected payoff from choosing $\beta_i$, given $t_i$, $g_i$, and $\delta$, is
        \begin{equation}
            \label{Eq:ERBImpliesEnvyFreeStep1}
            EU(\beta_i\mid \delta, t_i, g_i) = \sum_{s\in \mathcal{S}} \sum_{d \in \mathcal{D}} \bigg[ \phi_{g_i}(s\mid \beta_i) \cdot \delta(d \mid s,g_i) \cdot u(d,\beta_i,t_i,g_i) \bigg],
        \end{equation}
        so that the supposition that $u$ is group independent allows us to reduce Equation \ref{Eq:ERBImpliesEnvyFreeStep1} to
        \begin{equation}
            \label{Eq:ERBImpliesEnvyFreeStep2}
            EU(\beta_i\mid \delta, t_i, g_i) = \sum_{s\in \mathcal{S}} \sum_{d \in \mathcal{D}} \bigg[ \phi_{g}(s\mid \beta_i) \cdot \delta(d \mid s,g) \cdot u(d,\beta_i,t_i) \bigg],
        \end{equation}
        Satisfaction of ERB by $\delta$ in any $k$-SCP implies that 
        \[
        \sum_{s\in \mathcal{S}} \sum_{d \in \mathcal{D}} \bigg[ \phi_{g}(s\mid \beta) \cdot \delta(d \mid s,g) \bigg]
        = \sum_{s\in \mathcal{S}} \sum_{d \in \mathcal{D}} \bigg[\phi_{g'}(s\mid \beta) \cdot \delta(d \mid s,g')\bigg],
        \]
        for each $\beta \in \mathcal{B}$ and all $g,g' \in \mathcal{G}$.  More precisely, satisfaction of ERB by $\delta$ in a $k$-SCP implies that 
        \begin{equation}
        \label{Eq:ERBImpliesEnvyFreeStep3}
        \phi_{g}(d \mid \beta) \cdot \delta(d \mid s,g)
        = \phi_{g'}(d \mid \beta) \cdot \delta(d \mid s,g')
        \end{equation}
        for each $\beta \in \mathcal{B}$, all $d \in \mathcal{D}$, and all $g,g' \in \mathcal{G}$.  Thus, let 
    \begin{equation}
        \label{Eq:PsiDefinition}
    \psi_g(d,\beta,\delta) \equiv \phi_{g}(d \mid \beta) \cdot \delta(d \mid s,g).
    \end{equation}
    Then satisfaction of ERB by $\delta$ implies $\psi_g(d,\beta,\delta) = \psi_{g'}(d,\beta,\delta) \equiv \bar{\psi}(d,\beta,\delta)$ for all $g,g' \in \mathcal{G}$, and Equation \eqref{Eq:ERBImpliesEnvyFreeStep2} can be rewritten as follows:
        \begin{equation}
            \label{Eq:ERBImpliesEnvyFreeStep4}
            EU(\beta_i\mid \delta, t_i, g_i) = \sum_{d \in \mathcal{D}} \bigg[ \bar{\psi}(d,\beta,\delta) \cdot u(d,\beta_i,t_i) \bigg],
        \end{equation}
        The right hand side of Equation \ref{Eq:ERBImpliesEnvyFreeStep4} is independent of $g_i$, implying that whenever preferences are group independent, satisfaction of ERB by $\delta$ implies that $\delta$ satisfies both envy freeness and equal opportunity.
        
    \item (Envy Free $\Rightarrow$ ERB)  Equation \eqref{Eq:EnvyFreeEquality} holds for any algorithm $\delta$ satisfying EF.  For any $k$-SCP, Equation \eqref{Eq:EnvyFreeEquality} can be reexpressed as
    \[
    \sum_{d \in D} \Pr[d\mid b^*(t\mid \delta), \delta, g] u(d,b^*(t\mid \delta),t,g) = \sum_{d \in D} \Pr[d\mid b^*(t\mid \delta), \delta, g'] u(d,b^*(t\mid \delta),t,g'),
    \]
    for all $t \in \mathcal{T}$, and all $g,g' \in \mathcal{G}$.  Now consider any $t \in \mathcal{T}$ for which 
    \begin{equation}
    \label{Eq:BreakIndifference}
        u(1,b^*(t\mid \delta),t) > u(0,b^*(t\mid \delta),t).
    \end{equation}
    (This will be sufficient for our purposes when $k=2$ and will also provide insight into why EF implies ERB only in that case.) Simplifying notation, let $u^*(d) \equiv u(d,b^*(t\mid \delta),t)$.  We now proceed to show that this implication holds if and only if $k=2$.
    
    \begin{itemize}
         \item ($k=2$.)
    If $k=2$, then Equation \eqref{Eq:EnvyFreeEquality} can be further reduced to the equality of the following for all groups, $g \in \mathcal{G}$:
    \[
    EU_g(b^*(t)) \equiv (\phi_g \delta_1(g)  + (1-\phi_g)(1-\delta_0(g))) u^*(1) + ((1-\phi_g) \delta_0(g)  + \phi_g(1-\delta_1(g))) u^*(0). 
    \]
    We can simplify the presentation by normalizing the expected payoffs as follows:
    \begin{eqnarray*}
        & & (\phi_g \delta_1(g)  + (1-\phi_g)(1-\delta_0(g))) u^*(1) + ((1-\phi_g) \delta_0(g)  + \phi_g(1-\delta_1(g))) u^*(0) \\
        & \propto & (\phi_g \delta_1(g)  + (1-\phi_g)(1-\delta_0(g))) (u^*(1)-u^*(0)),
    \end{eqnarray*}
    so EF implies
    \[
    \phi_g \delta_1(g)  + (1-\phi_g)(1-\delta_0(g))
    \]
    is equal across all groups $g \in \mathcal{G}$, which is equivalent to equality of true positive rates for all groups.  An analogous derivation establishes that EF implies equality of true negative rates for all groups.  Accordingly, if $u$ is group independent, then satisfaction of EF by $\delta$ implies satisfaction of ERB by $\delta$.  
        \item ($k>2$.)  If $k>2$, we can provide a counterexample as follows.  Specifically, suppose $k=3$ and that $u$ is group independent.  Consider any $t \in \mathcal{T}$ with 
        \[
        u(0,0,t) > u(1,0,t) > u(2,0,t),
        \]
        and without loss of generality, suppose that $u(1,0,t)\equiv 0$.  (Notice that we have omitted $g$ because $u$ is group independent.)  Using the notation defined above in \eqref{Eq:PsiDefinition}, consider the family of algorithms defined as follows for any $\sigma\in [0,1]$ and $\alpha_g \in (0,1)$:
        \[
            \psi_g(d,0,\delta) = \begin{cases}
                \sigma \alpha_g & \text{ if } d=0,\\
                1-\alpha_g & \text{ if } d=1,\\
                (1-\sigma) \alpha_g & \text{ if } d=2.
            \end{cases}
        \]
        Suppose that $\alpha_g=0$ and $\alpha_{g'}=1$.  Then the following algorithms are envy free conditional on choosing $\beta_i=0$:
        \[
        \bigg\{\sigma: u(0,0,t) \sigma + u(2,0,t) (1-\sigma) \bigg\}.
        \]
        This is a continuum of algorithms, and relative to that continuum, they generically violate ERB.  Finally, note that this example applies to any $k>2$.
        \end{itemize}
    \end{itemize}
    We have shown that envy freeness implies equal opportunity and that ERB implies envy freeness in any $k$-SCP.  In addition, satisfaction of ERB implies envy freeness in a $k$-SCP if and only if $k=2$, as was to be shown.
\end{proof}
Theorem \ref{Th:EnvyFreeERB} is informative in several ways.  The most notable connection it illuminates regards the relationship between ERB and envy freeness: ERB is sufficient for envy freeness in \textit{all} $k$-SCPs.  On the other hand, the converse is true if and only in binary classification problems ($k=2$).  

More subtly, the theorem illuminates a distinction between behavior and welfare.  Envy freeness implies EO --- when the algorithm does not induce any individual to want to change their group membership prior to classification, then the algorithm must induce individuals in different groups to use identical strategies when choosing $\beta_i$.  However, the converse does not hold.  
This implies that an algorithm that induces EBR between two groups is not necessarily envy free.  

The power of this can be seen by a clearly discriminatory algorithm.  Suppose that members of group $X$'s costs are distributed according to a $\mathrm{Uniform}(0,1)$ distribution, and members of group $Y$'s costs are distributed according to a $\mathrm{Uniform}(0,2)$ distribution, and all individuals (regardless of whether they are in group $X$ or group $Y$) value the decision $d_i=1$ at $w\geq 2$, and $\phi_X=\phi_Y=1$.  Then the algorithm
\begin{eqnarray*}
\delta_0(X) = \delta_0(Y) & = & 0,\\
\delta_0(X) & = & 0.5,\\
\delta_0(Y) & = & 1,
\end{eqnarray*}
satisfies EBR (both members of group $X$ and $Y$ choose $\beta_i=1$ with certainty, regardless of $t_i$), but this algorithm clearly violates envy freeness: members of group $Y$ have unambiguously greater expected payoffs under $\delta$ than do members of group $X$.  Yet, the payoffs of the two groups satisfy anonymity and group independence.  

\section{Asymmetric Signal Structures}

In this section, we relax our assumption that the conditional distribution of $s$ given $\beta_i$ is independent of $\beta_i$.  To reduce notation, we consider only one group.  Define $k$ conditional distributions --- one for each of the behavioral choices in $\mathcal{B}$ --- as follows:
\[
\Pr[s_i=s\mid \beta_i=\beta] = \phi_\beta(s).
\]
To keep the presentation simple, this can be written in the binary classification case ($\mathcal{B}=\{0,1\}$) simply as follows:
\begin{eqnarray*}
\Pr[s_i=\beta_i\mid \beta_i=0] & = &  \phi_0 \in [0,1], \text{ and}\\
\Pr[s_i=\beta_i\mid \beta_i=1] & = &  \phi_1 \in [0,1].
\end{eqnarray*}
Whenever $\phi_0\neq \phi_1$, the signal structure generates different rates of Type-I and Type-II errors.  

\begin{table}[hbtp]
\centering
\begin{tabular}{|c|c|c|}
    \cline{2-3}
    \multicolumn{1}{c|}{}   &  $d_i=1$          & $d_i=0$                   \\
    \hline
         $\beta_i=1$        & $W_i$               & $0$    \\
    \hline
         $\beta_i=0$        & $\omega_i+t_i$    & $t_i$                     \\
    \hline
    \end{tabular}
    \caption{Preferences Over Outcomes \label{Tab:ErrorPreferencesRaw2}}
\end{table}
Fix any binary classification algorithm, $\delta \in [0,1]^2$, and suppose that $u(d,\beta,t)$ is defined as in Table \ref{Tab:ErrorPreferencesRaw2}. Individual $i$'s expected payoffs from each $\beta_i\in \{0,1\}$ are as follows:
\begin{eqnarray*}
    EU(\beta_i=0\mid t_i) & = & \phi_0 (\delta_0 t_i + (1-\delta_0) (\omega_i+t_i)) + (1-\phi_0) (\delta_1 (\omega_i + t_i) + (1-\delta_1) t_i),\\
    & = & t_i + (\phi_0 (1-\delta_0) + (1-\phi_0) \delta_1) \omega_i,\\
    EU(\beta_i=1\mid t_i) & = & \phi_1 (\delta_1 W_i + (1-\delta_1) 0) + (1-\phi_1) (\delta_0 (0) + (1-\delta_0) W_i),\\
    & = & (\phi_1 \delta_1 + (1-\phi_1) (1-\delta_0)) W_i
\end{eqnarray*}
The IC condition for $i$ to choose $\beta_i=1$ is then
\begin{eqnarray*}
    EU(\beta_i=0\mid t_i) & \leq & EU(\beta_i=1\mid t_i),\\
    t_i + (\phi_0 (1-\delta_0) + (1-\phi_0) \delta_1) \omega_i& \leq & (\phi_1 \delta_1 + (1-\phi_1) (1-\delta_0)) W_i,\\
    \Rightarrow t_i & \leq & (\phi_1 \delta_1 + (1-\phi_1) (1-\delta_0)) W_i - (\phi_0 (1-\delta_0) + (1-\phi_0) \delta_1) \omega_i.
\end{eqnarray*}
Suppose that $W_i=\omega_i\equiv r$: 
\begin{eqnarray*}
    t_i & \leq & (\phi_1 \delta_1 + (1-\phi_1) (1-\delta_0)) r - (\phi_0 (1-\delta_0) + (1-\phi_0) \delta_1) r,\\
    % t_i & \leq & (\phi_1+\phi_0-1) \delta_1 + (1-\phi_1-\phi_0) (1-\delta_0)) r,\\
    % t_i & \leq & r \cdot ((\phi_1+\phi_0-1) (\delta_0+\delta_1) + 1-(\phi_1+\phi_0)),\\
    \Rightarrow t_i & \leq & r \cdot (\delta_0+\delta_1-1)(\phi_1+\phi_0-1) ,
\end{eqnarray*}
which reduces to the following when $\phi_0=\phi_1=\phi$:
\begin{eqnarray*}
    t_i & \leq & r \cdot  (\delta_0+\delta_1-1)(2\phi-1).
\end{eqnarray*}

\newpage

\bibliography{john-Madrid-Fairness}

\end{document}